 \renewcommand*{\backref}[1]{}
 \renewcommand*{\backrefalt}[4]{
   \ifcase #1
   [No citations.]
   \or [#2]
   \else [#2]
   \fi }
    \def\MR#1{}
\numberwithin{equation}{section}
\theoremstyle{plain}
\newtheorem{theorem}[equation]{Theorem}
\newtheorem{lemma}[equation]{Lemma}
\newtheorem{corollary}[equation]{Corollary}
\newtheorem*{result}{Main Result}
\newtheorem*{namedtheorem}{\theoremname}
\newcommand{\theoremname}{testing}
\theoremstyle{definition}
\renewcommand{\paragraph}[1]{\medskip\noindent {\bf #1.}}
\newcommand{\sixj}[6]{
  \left|\begin{array}{ccccc}
  #1 & #2 & #3\\
  #4 & #5 & #6\\
  \end{array}\right|
}
\newcommand{\qcap}[2]{
\begin{scope}[decoration={
    markings,
    mark=at position 0.75 with {\arrow{>}}}
    ]
    \draw[line width=1,postaction={decorate}] (#1,#2) arc (180:0:1);
    \end{scope}}
\newcommand{\qcup}[2]{
\begin{scope}[decoration={
    markings,
    mark=at position 0.25 with {\arrow{>}}}
    ]
\draw[line width=1,postaction={decorate}] (#1,#2) arc (180:360:1);
\end{scope}}
\newcommand{\tetsymbleft}[6]{
\qcap{0}{0}
\qcup{0}{-2}
\begin{scope}[decoration={
    markings,
    mark=at position 0.5 with {\arrow{>}}}
    ]
\draw[line width=1,postaction={decorate}] (0,0) node[left] {$#1$} --(0,-2) node[left,below, xshift=-4, yshift=-4] {$#3$};
\draw[line width=1,postaction={decorate}] (2,0) node[right,above,xshift=3,yshift=2] {$#6$} -- (2,-2) node[right] {$#4$};
%\draw (.75,-.75) node {$#2$};
\draw (3.5,-1) node {$#5$};
\end{scope}
\begin{scope}[decoration={
    markings,
    mark=at position 0.5 with {\arrow{<}}}
    ]
    \draw[line width=1,postaction={decorate}] (0,-2)--(2,0) node[midway,above] {$#2$};
\draw[line width=1,postaction={decorate}] (1,1) to  [out=90, in=90] (3,1)--(3,-3) to [out=-90, in=-90] (1,-3);
\end{scope}}
\newcommand{\tetsymbright}[6]{
\begin{scope}[xscale=-1,xshift=-2cm]
\qcap{0}{0}
\end{scope}
\begin{scope}[xscale=-1,xshift=-2cm]
\qcup{0}{-2}
\end{scope}
\begin{scope}[decoration={
    markings,
    mark=at position 0.5 with {\arrow{>}}}
    ]
\draw[line width=1,postaction={decorate}] (0,0) --(0,-2) node[left,below, xshift=-4, yshift=-4] {$#1$};
\draw[line width=1,postaction={decorate}] (2,0) node[right,above,xshift=3,yshift=2] {$#4$} -- (2,-2) node[right] {$#6$};
%\draw (.75,-.75) node {$#2$};
\draw (0.25,1) node[left] {$#3$};
\draw (3.5,-1) node {$#5$};
\end{scope}
\begin{scope}[decoration={
    markings,
    mark=at position 0.5 with {\arrow{<}}}
    ]
    \draw[line width=1,postaction={decorate}] (2,-2)--(0,0) node[midway,above] {$#2$};
\draw[line width=1,postaction={decorate}] (1,1) to  [out=90, in=90] (3,1)--(3,-3) to [out=-90, in=-90] (1,-3);
\end{scope}
}
\newcommand{\Vect}{\mathrm{Vec}}
\newcommand{\Chain}{C}
\newcommand{\Cycle}{Z}
\newcommand{\M}{M}
\newcommand{\tri}{\mathfrak{T}}
\newcommand{\Z}{\mathbb{Z}}
\newcommand{\R}{\mathbb{R}}
\newcommand{\Q}{\mathbb{Q}}
\newcommand{\C}{\mathbb{C}}
\newcommand{\ty}{\operatorname{TY}}
\newcommand{\poly}{\operatorname{poly}}
\newcommand{\adm}{\operatorname{Adm}}
\newcommand{\calP}{\mathcal{P}}
\newcommand{\calQ}{\mathcal{Q}}
\newcommand{\calC}{\mathcal{C}}
\newcommand{\calB}{\mathcal{B}}
\newcommand{\e}{\mathbf{e}}
\newcommand{\tv}{\operatorname{TV}}
\newcommand{\Ztwo}{\Z/2\Z}
\newcommand{\rad}[2]{\operatorname{rad}(#1,#2)}
\newcommand{\row}{\operatorname{row}}
\newcommand{\col}{\operatorname{col}}
\newcommand{\Eemp}{E_{\emptyset}}
\newcommand{\Eone}{E_{{\bullet}}}%edges colored w. one m
\newcommand{\Femp}{F_{\emptyset}}%faces having exactly no m
\newcommand{\Ftwo}{F_{\bullet\bullet}}%faces having exactly two m
\newcommand{\Temp}{T_{\emptyset}}%tetrahedra with no m
\font\Bigmath=cmsy10 scaled \magstep2
\font\bigmath=cmsy10 scaled \magstep1
\def\diamondplusone{\mathrel{%
  \ooalign{\raise.25ex\hbox{\hspace{0.015cm}$\scriptscriptstyle+$}\cr\hss$\diamond$\hss}}}
\def\diamondplustwo{\mathrel{%
  \ooalign{$+$\cr\hss\lower.255ex\hbox{\Bigmath\char5}\hss}}}
\def\diamondplusthree{\mathrel{%
  \ooalign{$\scriptstyle+$\cr\hss\lower.29ex\hbox{\bigmath\char5}\hss}}}
\def\diamondminusone{\mathrel{%
  \ooalign{\raise.25ex\hbox{\hspace{0.015cm}$\scriptscriptstyle-$}\cr\hss$\diamond$\hss}}}
\def\diamondminustwo{\mathrel{%
  \ooalign{$-$\cr\hss\lower.255ex\hbox{\Bigmath\char5}\hss}}}
\def\diamondminusthree{\mathrel{%
  \ooalign{$\scriptstyle-$\cr\hss\lower.29ex\hbox{\bigmath\char5}\hss}}}
\newcommand{\diamondplus}{\diamondplusone}
\newcommand{\diamondminus}{\diamondminusone}
\newcommand{\Ttriang}{T_{\triangle}}
\newcommand{\Tquadplus}{T_{\boxplus}}
\newcommand{\Tquadminus}{T_{\boxminus}}
\newcommand{\Tcrookquadplus}{T_{\diamondplus}}
\newcommand{\Tcrookquadminus}{T_{\diamondminus}}
\newcommand{\ifff}{{\em iff} }
\newcommand{\ie}{{\em i.e.}}
\newcommand{\eg}{{\em e.g.}}
\definecolor{ericgreen}{rgb}{0.1, 0.75, .1}
\begin{document}

\title[FPT algorithm for Tambara-Yamagami invariants]{An algorithm for Tambara-Yamagami quantum invariants of 3-manifolds, parameterized by the first Betti number}

\author{Colleen Delaney, Cl\'ement Maria and Eric Samperton}

\address{Colleen Delaney: Purdue University, Departments of Mathematics and Physics}
\email{colleend@purdue.du}

\address{Cl\'ement Maria: INRIA Universit\'e C\^ote d'Azur \& FGV/EMAp Rio de Janeiro}
\email{clement.maria@inria.fr}

\address{Eric Samperton: Purdue University, Departments of Mathematics and Computer Science}
\email{eric@purdue.edu}

\begin{abstract}
Quantum topology provides various frameworks for defining and computing invariants of manifolds inspired by quantum theory. One such framework of substantial interest in both mathematics and physics is the Turaev-Viro-Barrett-Westbury state sum construction, which uses the data of a spherical fusion category to define topological invariants of triangulated 3-manifolds via tensor network contractions. In this work we analyze the computational complexity of state sum invariants of 3-manifolds derived from Tambara-Yamagami categories. While these categories are the simplest source of state sum invariants beyond finite abelian groups (whose invariants can be computed in polynomial time) their computational complexities are yet to be fully understood. We first establish that the invariants arising from even the smallest Tambara-Yamagami categories are \#P-hard to compute, so that one expects the same to be true of the whole family. Our main result is then the existence of a fixed parameter tractable algorithm to compute these 3-manifold invariants, where the parameter is the first Betti number of the 3-manifold with $\mathbb{Z}/2\mathbb{Z}$ coefficients. 

Contrary to other domains of computational topology, such as graphs on surfaces, very few hard problems in 3-manifold topology are known to admit FPT algorithms with a topological parameter. However, such algorithms are of particular interest as their complexity depends only polynomially on the combinatorial representation of the input, regardless of size or combinatorial width. Additionally, in the case of Betti numbers, the parameter itself is computable in polynomial time. Thus while one generally expects quantum invariants to be hard to compute classically, our results suggest that the hardness of computing state sum invariants from Tambara-Yamagami categories arises from classical 3-manifold topology rather than the quantum nature of the algebraic input.
\end{abstract}

\maketitle

 \tableofcontents{}

\section{Introduction}
\label{sec:introduction}
%!TEX root = main.tex

Quantum physics---especially quantum field theory---is a profuse source of invariants in low-dimensional topology, and these invariants often provide information about manifolds beyond the reach of classical topology.  However, the native paradigm for (approximately) computing them is manifestly quantum computational, so that in general one should expect them to be hard to compute on a classical computer \cite{Kuperberg:hard}. Naturally then we seek to understand the full landscape of the computational complexities of quantum invariants. From the point of view of computational topology we are particularly interested in a middle ground where there may be interesting invariants that are not too hard to compute, and to develop algorithms to compute them as efficiently as possible. 

In this work, we investigate the complexity landscape of a specific class of quantum invariants of 3-manifolds, namely, the \emph{Turaev-Viro-Barrett-Westbury (TVBW) state sum construction} \cite{TuraevViro,BarrettWestbury}.  Unlike other quantum invariants (e.g.~Donaldson invariants), the TVBW construction is \emph{a priori} algorithmically computable, combining a finite amount of algebraic data in the form of a {\em (skeletalized) spherical fusion category} $\calC$ with a triangulation $\tri$ of a closed, oriented 3-manifold $\M$ to generate a complex algebraic number $|\M|_\calC \in \C$ that is an orientation-preserving homeomorphism invariant of $\M$.  The invariant $|\M|_\calC$ is defined as the contraction of a certain tensor network built by decorating $\tri$ with the data of $\calC$, and can be formally described via a combinatorial {\em state sum formula}.  We refer the reader to Appendix \ref{sec:quantumtopology} for a brief review of the TVBW construction (in the case of multiplicity-free, pseudo-unitary spherical categories) and \cite{Turaev:book} for more background.

As briefly reviewed at the end of this introduction, it is desirable for various reasons to understand the computational complexity of evaluating the TVBW invariant $|\cdot|_\calC$ on a given triangulated 3-manifold
\[
\begin{aligned}
|\cdot|_\calC: \{\text{triangulations of closed, oriented 3-manifolds}\} &\to \mathbb{C} \\
\M &\mapsto |\M|_\calC.
\end{aligned}
\]
More precisely, we would like to understand how the computation of these invariants depends on the choice of spherical fusion category $\calC$, since there are countably infinitely many equivalence classes of such categories, and their classification is at least as hard as the classification of finite groups.  As usual, we aim to understand the complexity of computing $|M|_\calC$ as a function of the size of a triangulation $\tri$ used to encode $M$. 

We approach this complexity-theoretic classification problem from the perspective that fusion categories are ``quantum'' generalizations of (the representation theory of) finite groups.  While this perspective has not been made explicitly overt in the computational topology literature, it is common in quantum algebra, where the extent to which a fusion category deviates from a finite group $G$ can be made precise in various ways.  For the sake of space, we will not review these notions here, but will note that anyway one cuts it, the simplest spherical fusion categories are of the type $\Vect(A)$---where $A$ is an abelian group---consisting of $A$-graded finite-dimensional vector spaces, with a tensor product that comes from the group operation in $A$.  In fact, for any finite group $G$, the category of $G$-graded vector spaces $\Vect(G)$ forms a spherical fusion category, but if $G$ is non-abelian, we consider this category to be more ``complicated" than $\Vect(A)$.  One reason for this is that $|\cdot|_{\Vect(A)}$ is always polynomial-time computable, whereas $|\cdot|_{\Vect(G)}$ is often $\#P$-hard for non-abelian groups \cite{KuperbergSamperton2}.

Arguably, after $\Vect(A)$, the next simplest family of fusion categories consists of the Tambara-Yamagami categories $\ty(A,\chi,\nu)$.  Such a category is determined by a finite abelian group $A$, together with a small amount of additional data: a \emph{bicharacter} (\emph{i.e.}, a non-degenerate symmetric bilinear pairing) $\chi: A \times A \to U(1)$, and a choice of square root $\nu = \pm1/ \sqrt{|A|} = \pm |A|^{-1/2}$ \cite{TambaraY98}.  (See Appendix \ref{sec:quantumtopology} for a complete description of these categories using this data.) And yet, in Appendix \ref{sec:hard} we show that already in the simplest non-trivial case with $A=\Z/2\Z$, the invariants $|\cdot|_{\ty(\Ztwo,\chi,-1/\sqrt{2})}$ are $\#P$-hard to compute.  This is more-or-less a restatement of known results about the Turaev-Viro invariant $\tv_4$ \cite{KirbyM04,MariaS20}, since $|\cdot|_{\ty(\Ztwo,\chi,1/\sqrt{2})} = \tv_4(\cdot)$.  However, as we shall explain, our approach offers a conceptual clarification by placing $\tv_4$ within the broader taxonomy of spherical fusion categories.  In particular, we conjecture that the TVBW state sum invariant of \emph{every} non-trivial Tambara-Yamagami category $\ty(A,\chi,\nu)$ is $\#\mathsf{P}$-hard. 

Our main result is a parameterized algorithm to compute the TVBW state sum invariants $|\M|_{\ty(A,\chi,\nu)}$ associated to Tambara-Yamagami categories, where our parameter is the first $\Ztwo$ Betti number $\beta_1$. 

\begin{result} [Informal; see Theorem~\ref{thm:maincpx}.]
Fix an integer $B\ge 0$ and a Tambara-Yamagami category $\ty(A, \chi, \nu)$.  Then there exists a polynomial time algorithm to evaluate $|\M|_{\ty(A,\chi,\nu)}$ for triangulated 3-manifolds with $\beta_1(\M) \le B$. 
\end{result}

\noindent While the hardness of general Tambara-Yamagami invariants is still conjectural, we interpret our main result as showing that any such hardness is explainable by a simple classical topological fact: there exist triangulations $\tri$ of 3-manifolds $M$ with $n$ simplices where $\beta_1(\M)$ can be as large as $\Theta(n)$.

Our algorithm succeeds by converting the natively quantum topological computational problem of computing a TVBW state sum invariant of a triangulated 3-manifold ---which \emph{a priori} requires an expensive tensor network contraction--- into a sequence of other much more efficient classical computational problems in algebraic number theory and combinatorial-algebraic topology.  Prior to our work, to the best of our knowledge, the only published algorithm in 3-manifold topology directly parameterized by an efficiently computable \emph{topological} quantity (whose value is independent of the presentation of the input) is the algorithm of Maria and Spreer to compute the Turaev-Viro quantum invariant $\tv_4$ of 3-manifolds~\cite{TuraevViro} at a 4th root of unity in $O(2^{\beta_1} n^3)$ operations~\cite{MariaS20}.

The technical underpinnings of our algorithm are explained in Section~\ref{sec:fptalgorithm}; the algorithm itself is precisely described and analyzed in Section \ref{sec:complexity}.  At the heart of our algorithm is a (non-obvious!) observation: $|\M|_{\ty(A,\chi,\nu)}$ can be identified with a (normalized) sum of $|H^1(\M,\Z/2\Z)| = 2^{\beta_1}$ many \emph{Gauss sums of $\Q/\Z$-valued quadratic forms} on certain finite abelian groups derived from $A$ and the given triangulation of $M$.
\begin{align*}
	|\M|_{\ty(A,\chi,\nu)}
	&= \sum_{\theta \in \adm(\tri)} |\tri|_\theta && \substack{\text{(def.~of TVBW state} \\ \text{sum invariant; Sec.~\ref{sec:tyquantuminvariant})}} \\
	&= \sum_{\alpha \in Z^1(\tri, \Ztwo)} \underbrace{\sum_{\theta \in \adm(\tri,\alpha)} |\tri|_\theta}_{\substack{\text{\emph{partial state sum at}} \\ \text{\emph{cocycle $\alpha$}; Sec.~\ref{ss:partial}}}} && \substack{\text{($\Ztwo$-grading of} \\ \text{$\ty(A,\chi,\nu)$; Sec.~\ref{ss:cocyle})}} \\
	&=  \#B^1(\tri,\Z/2\Z)\cdot \sum_{[\alpha] \in H^1(\tri,\Z/2\Z)} \underbrace{\sum_{\theta \in \adm(\tri,\alpha)} |\tri|_\theta}_{\substack{(\star) \ \text{poly. time computable as a} \\ \text{quadratic Gauss sum}}} && \substack{\text{(follows from \cite{turaev20123};}\\ \text{Sec.~\ref{ss:cohomologous})}}  
\end{align*}
\noindent
The second and third equalities above follow for a rather general reason: $\ty(A,\chi,\nu)$ possesses the structure of a $\Ztwo$-{\em grading} and so by general facts related to {\it homotopy quantum field theory} \cite{turaev20123}, we are able to reduce $|M|_{\ty(A,\chi,\nu)}$ to a sum $2^{\beta_1}$ terms in a manner similar in spirit to \cite{MariaS20}.  Notably, however, the concept of \emph{graded fusion categories} is absent from \cite{MariaS20}, and this structure enables us to analyze the state sum invariant of $M$ directly from \emph{any} simplicial triangulation, without having to pass to a one-vertex triangulation via the crushing algorithm of Jaco and Rubenstein \cite{JacoR03}.

The bulk of Section \ref{sec:fptalgorithm} is then concerned with explaining and justifying $(\star)$, and our methods here diverge substantially from \cite{MariaS20}.  We remark that the efficient algorithmic evaluation of quadratic Gauss sums appears to be well-known to experts in computational algebraic number theory, although we were unable to find any specific references for this fact, the closest apparently being~\cite{BasakJ15}; see Section \ref{ss:nt} and Appendix \ref{sec:gausssum} for further discussion.

The general nature of our main result allows us to establish new conceptual insight into the complexity-theoretic landscape of TVBW invariants, a portion of which we summarize in Figure \ref{fig:cpxquantumtopo}.  It is known that TVBW 3-manifold invariants are often \#P-hard to compute exactly, and sometimes are hard even just to approximate~\cite{JaegerVertiganWelsh, KirbyM04, KroviRussell, Kuperberg:hard, KuperbergSamperton1, KuperbergSamperton2}; moreover, in many of these cases, it is known that there can be no tractable algorithm in $\beta_1$.  Thus the existence of a parametrized algorithm in a topological and efficiently computable parameter for the Tambara-Yamagami invariants is an interesting counterpoint to their expected hardness. This suggests that the hardness of computing the invariants is due to some intrinsic hardness in 3-manifold topology, rather than the Tambara-Yamagami category itself, whose construction based on a finite abelian group is quite simple compared to more general spherical fusion categories.

\begin{figure}[t]
	\begin{tabular}{c|c c c}
		spherical category $\mathcal{C}$ & complexity of $|M|_\mathcal{C}$ & FPT in $\beta_1$? & Reference \\
		\hline
		$\Vect(A)$ for $A$ abelian & in $\mathsf{fP}$ & Yes & well-known, e.g.~\cite{KuperbergSamperton1} \\
		$\ty(\Z/2\Z,\exp(\pi i a b), -1/\sqrt{2})$ & $\#\mathsf{P}$-hard & Yes & \cite{KirbyM04,MariaS20}, App.~\ref{sec:hard} \\
		$\ty(A,\chi,\nu)$ & unknown in general & Yes & this work \\
		$\Vect(G)$ for $G$ non-abelian solvable & unknown & unknown & \\
		$\Vect(G)$ for $G$ non-abelian simple & $\#\mathsf{P}$-complete & No$^*$& \cite{KuperbergSamperton1} \\
		$U_q\mathfrak{sl}_2$-mod ($q$ a non-lattice root of 1) & $\#\mathsf{P}$-hard & No$^*$  & \cite{JaegerVertiganWelsh,Kuperberg:hard}
	\end{tabular}
	\caption{State of the art for  computing TVBW invariants based on ``simple'' spherical fusion categories.  $^*$(unless $\mathsf{fP} = \#\mathsf{P}$)}
	\label{fig:cpxquantumtopo}
\end{figure}

After scrutinizing Figure \ref{fig:cpxquantumtopo}, it seems sensible to expect that there should be a dichotomy theorem for TVBW invariants.  A na\"ive conjecture would be that $|\cdot|_{\calC}$ is in $\mathsf{fP}$ if and only if $\calC\simeq\Vect(A)$, and otherwise $|\cdot|_{\calC}$ is $\#\mathsf{P}$-hard.  However, this cannot be correct, as there are ``twisted" versions of $A$-graded vector spaces $\Vect(A,\eta)$, $\eta \in H^3(A,U(1))$, for which $|\cdot|_{\Vect(A,\eta)}$ is also known to be polynomial-time computable.  More subtly, if $G$ is any 2-step nilpotent group, then $|M|_{\Vect(G)} = \#\{\pi_1(M) \to G\}/\#G$ should also be in $\mathsf{fP}$.

A systematic understanding of the computational complexity of state sum invariants is significant because they fit into the larger framework of a fully-extended 3D TQFT. Indeed, this connection endows the invariants with important applications to {\em fault-tolerant quantum computation} and the characterization of {\em topologically ordered phases of matter}, e.g.~via {\em topological quantum computing}~\cite{Kitaev:anyons,KonigKuperbergReichardt:codes,RowellWang:bulletin,WangCBMS}, as well as to the developing paradigm of  ``non-invertible'' or ``categorical'' symmetries and dualities present in {\em quantum field theories} and {\em lattice models}~\cite{AasenFendleyMong,StandardModel,FreedMooreTeleman,FreedTeleman,ThorngrenWang}. Of course, there is plenty of motivation from low-dimensional topology as well, where these invariants can be used in practice to distinguish between non-homeomorphic 3-manifolds.  TVBW invariants are heuristically strong, with a typical category $\calC$ apparently capable of distinguishing ``most'' pairs of non-homeomorphic manifolds.\footnote{That said, every $\calC$ does in fact admit infinitely many pairs of non-homeomorphic manifolds that it cannot distinguish; moreover, Funar has shown the existence of pairs of distinct manifolds $\M_1 \ne \M_2$ such that $|M_1|_\calC = |M_2|_\calC$ for all spherical fusion categories $\calC$~\cite{Funar}.} Notably, the Tambara-Yamagami invariants studied in this article are capable of distinguishing homotopy equivalent lens spaces that are not homeomorphic, cf.~\cite[Thm.~2]{MariaS20}.

\section{Background}
\label{sec:background}
%!TEX root = main.tex
In this section we introduce the TVBW invariants of 3-manifolds derived from Tambara-Yamagami categories, as well as some of the basic components necessary for our algorithm to compute them.

\subsection{Elements of algorithmic number theory} 
\label{ss:nt}

\paragraph{Finite abelian groups} Let $(A,+)$ be a {\em finite abelian group} with $|A|$ elements, denoted additively. The fundamental theorem of finitely generated abelian groups shows that $A$ admits a unique {\it primary decomposition} of the form:
\[
  A \cong \bigoplus_{\text{$p$ prime}} \ \ \bigoplus_{k_i,n_i\,:\, i \in I_p} \left(\Z/p^{k_i}\Z\right)^{n_i}
\] % Colleen: we haven't defined I_p, not necessarily a problem though. 
for finitely many primes $p$, and positive integers $k_i,n_i > 0$. For a finite abelian group $A$, we denote by $d(A)$ the number of non-trivial summands in its primary decomposition.   We use $A_{(p)}$ to denote the subgroup of $A$ made of all elements of $A$ whose order is a power of $p$. Hence, 
\[
	A = \bigoplus_{p \text{ prime}} A_{(p)},\ \ \ \text{ where } A_{(p)} = \bigoplus_{k_i,n_i\,:\,i \in I_p} \left(\Z / p^{k_i} \Z \right)^{n_i}.
\] % Colleen: we already introduce the order of A implicitly in the first line of this subsection. Moreover, I think if we're assuming familiarity e.g. with simplicial complexes, definitions of homology/cohomology then these remarks are too elementary for our intended audience.

The following is a standard algorithmic result for finite abelian groups, formulated according to our needs in this article; see for example~\cite[Chap.~1.11]{Munkres84}.

\begin{theorem}
\label{thm:algoabeliangroup}
Let $H$ be a finite abelian group presented via a primary decomposition with generators $h_1, \ldots, h_{d(H)}$ for each of its prime summands. If $G \subseteq H$ is a subgroup of $H$ specified via a set of generators $g_1, \ldots, g_{d(G)}$ (each of which is encoded by a list of integers $n_{ij}$ such that that $g_i = \sum_j n_{ij} h_j$), then there exists a polynomial time algorithm using $O(d(|H|)^3)$ operations in $H$ that finds a presentation of the primary decomposition of $G$---that is, a new \emph{minimal} set of generators of $G$ denoted $g'_1, \ldots, g'_{d(G)}$ (each of which is, as before, represented by a list of integers $n'_{ij}$ such that $g'_i = \sum_j n'_{ij} h_j$) together with relations of the form $(g'_i)^{p_i^{k_i}}$ for primes $p_i$ and positive integers $k_i$. 
\end{theorem}

\paragraph{Pairings and forms on finite abelian groups}  We now quickly review the basics of quadratic forms on finite abelian groups.  We refer the reader to~\cite{Scharlau85} for more background.

Let $(A,+)$ be a finite abelian group, denoted additively. A {\em symmetric bilinear pairing} $b \colon A \times A \to G$ into the (not necessarily finite) abelian group $(G,+)$ is a map satisfying, for any $x,y,z \in A$, 
\[b(x,y)=b(y,x), \ \ \ b(x+y,z)=b(x,z) + b(y,z), \ \ \ \text{and,} \ \ b(x,y+z)=b(x,y) + b(x,z).\]
A pairing is {\em non-degenerate} if for every $x \in A$, $x \neq 0$, there exists some $y \in A$ such that $b(x,y)$ is not trivial in $G$. When the symmetric bilinear pairing takes value in $(G,+) = (\Q/\Z, +)$, where $(\Q/\Z,+)$ is the additive group of rational numbers modulo the integers, we call the pair $(A,b)$ a {\em discriminant pairing}.  For any real number $x \in \R$, denote by $\e(x)$ the quantity $\exp(2\pi i x)$. Let $(U(1), \times)$ be the multiplicative group of complex numbers $\{\e(x) : x \in \R/\Z\}$ on the unit circle. There is a natural group isomorphism $\Q/\Z \to U(1), x \mapsto \e(x)$. When a symmetric bilinear pairing $b$ is non-degenerate and takes value in $(U(1),\times)$, we call it a {\em bicharacter}. Note that $(A,b)$ is a non-degenerate discriminant pairing if and only if $(A,\e(b(\cdot,\cdot)))$ is a bicharacter.  
If $(A,b)$ is a discriminant pairing and $e_1, \ldots, e_k \in A$, then the {\em Gram matrix} of $(e_1, \ldots, e_k)$ 
%, denoted $\Gram_b(e_1, \ldots, e_k)$, 
is the $k \times k$ matrix $\left(b(e_i,e_j)\right)_{i,j}$. 

For a map $q \colon A \to G$, define $\partial q \colon A \times A \to G$ by $\partial q(x,y) := q(x+y)-q(x)-q(y)$ for any $x,y \in A$. The map $q$ is a {\em (homogeneous) quadratic form} if, for any $x \in A$ and $u \in \Z$, $q(u x) = u^2 q(x)$ and the map $\partial q$ is bilinear. For a quadratic form $q$, we call $\partial q$ the {\em bilinear pairing associated to $q$}. A quadratic form $q$ is {\em non-degenerate} if the bilinear pairing $\partial q$ is non-degenerate. If $q$ takes value in $\Q/\Z$, the pair $(A,q)$ is called a {\em pre-metric group}, and if moreover $q$ is non-degenerate, $(A,q)$ is a {\em metric group}.

% Colleen: Prior to this point we were using A for the source of a quadratic form and G for the target. We should switch it to A if there's no risk of confusion with the A in TY(A,\chi,\nu) as the paper proceeds.
\paragraph{Gauss sums on pre-metric groups} If $\gamma \in \Q/\Z$, let $\e(\gamma) := \exp(2\pi i \gamma)$.  To every pre-metric group $(G,q)$ we associate a complex number $\Theta(G,q)$ called the \emph{Gauss sum}:
\[
  \Theta(G,q) := \frac{1}{ | \sqrt{|G|} | } \sum_{x \in G} \e(q(x)).
\]
Our next theorem is one of the central tools in our algorithm for computing the TVBW invariants of Tambara-Yamagami categories. 

\begin{theorem}
\label{thm:evaluategausssum}
Let $G$ be a finite abelian group presented via a primary decomposition with $d(G)$ summands.  If oracle access to a quadratic form $q: G \to \Q/\Z$ is provided, then the Gauss sum $\Theta(G,q)$ can be evaluated in polynomial time using $O(d(G)^3 \poly(\log |G|))$ operations, $O(d(G)^2)$ oracle accesses, and $O(d(G)^2 \poly(\log |G|))$ memory words. 
\end{theorem}

Theorem \ref{thm:evaluategausssum} is seemingly folklore in the algebraic number theory literature, but we were unable to find any references proving it.   We understand it as a consequence of a scattered collection of works on pre-metric groups and algorithms.  
Wall~\cite{Wall63} abstractly classifies metric groups by showing how to decompose them into direct sums of elementary pieces; Basak and Johnson~\cite{BasakJ15} later provide an algorithmic formulation of the decomposition theorem and explain that the Gauss sum of a metric group can be computed readily from the decomposition. We have however not found a complexity analysis of this construction anywhere in the literature.  The closest we are aware of is \cite[Thm.~1.1]{cai2010tractable}, which shows that Gauss sums of (not-necessarily-homogeneous) quadratic forms on cyclic groups $\Z/N\Z$ are computable in polynomial time; however, even if $A$ itself is cyclic, the Gauss sums we will need to compute in our algorithm for $|M|_{\ty(A,\chi,\nu)}$ are on non-cyclic pre-metric groups of the form $(A^n, q)$. In order to fill this gap, we prove Theorem~\ref{thm:evaluategausssum} in Appendix~\ref{sec:gausssum} by supplementing the algorithm of~\cite{BasakJ15} with a detailed complexity analysis.

\subsection{Combinatorial and algebraic topology}
\label{ss:top}

\paragraph{(Co)homology}  Throughout this article we employ the simplicial cohomology of simplicial complexes with $\Z/2\Z$ coefficients.  We assume the reader is familiar with the basic definitions (see \cite{Hatcher02} for an introduction to the topic).  The following well-known fact can be understood as an application of Theorem \ref{thm:algoabeliangroup}.

\begin{theorem}
\label{thm:computabilitycohom}
Let $\tri$ be a simplicial complex with $n$ simplices. For any dimension $d = 0,1,2,\dots$, the following can be computed in polynomial time using $O(n^3)$ operations: a basis of $\Cycle^d(\tri,\Z/2\Z)$; a basis of $B^d(\tri;\Z/2\Z)$; a set of elements in $Z^d(\tri;\Z/2\Z)$ that represent a basis of $H^d(\tri;\Z/2\Z)$.  In particular, the dimension of each of these spaces can be computed in polynomial time.  \qed
\end{theorem}

\paragraph{Triangulated 3-manifolds}  The most common data type in computational 3-manifold topology is that of a generalized triangulation; e.g., see \cite{Matveev03}.  However, strictly speaking, the TVBW invariants are only defined for \emph{simplicial (or PL)} triangulations.  It is possible to extend the definition to ``polytope decompositions" (including generalized triangulations) \cite{kirillov2010turaev}, but we will not need to work at this level of generality.\footnote{Moreover, it is of no import for qualitative complexity theory, as the second barycentric subdivision of a generalized triangulation is always simplicial.}  Thus, throughout this work, we make the following conventions: a 3-manifold $M$ is always closed and oriented, and presented by a simplicial triangulation, denoted $\tri$, together with an orientation of $\tri$.  The orientability of $M$ is reflected in the fact that the simplicial homology group $H_3(\tri;\Z)$ is isomorphic to $\Z$ (this can be checked quickly from $\tri$), and then an \emph{orientation} of $\tri$ is encoded by labeling each tetrahedron in $\tri$ with a sign $+1$ or $-1$ in a way that gives rise to a $\{+1,-1\}$-valued cellular 0-cocycle on the cellulation dual to $\tri$.

Later, it will also be convenient to decorate each \emph{edge} of an oriented triangulation with an arbitrarily chosen orientation.  We stress that these edge orientations are completely arbitrary, and have nothing to do with the orientation of the manifold's tetrahedra.  Going forward, by a small abuse of language, if we say \emph{triangulation}, then we will usually mean an oriented triangulation equipped with such an edge orientation.

We denote the set of vertices of $\tri$ by $V$, the set of edges by $E$, the set of triangular faces by $F$, and the set of tetrahedra by $T$.

\subsection{Tambara-Yamagami quantum invariants}
\label{sec:tyquantuminvariant}

The {\em quantum invariants} we consider in this article are $\mathbb{C}$-valued topological invariants of closed 3-manifolds derived from algebraic data (a {\em spherical fusion category}) attached to a combinatorial presentation of the 3-manifold (a {\em simplicial triangulation}), together with evaluation rules described in the works of Turaev and Viro~\cite{TuraevViro} and Barrett and Westbury~\cite{BarrettWestbury}.  We now give an extremely brief introduction to {\em Tambara-Yamagami quantum invariants}, which are the TVBW invariants produced by a particular infinite family of spherical fusion categories called \emph{Tambara-Yamagami categories} \cite{TambaraY98}.  For the sake of space, here we provide only the details necessary to follow the proofs of our main results.  We refer the reader to Appendix \ref{sec:quantumtopology} for the full details of the combinatorial description of spherical fusion categories in general and for Tambara-Yamagami categories in particular, as well as the general construction of TVBW invariants (for unitary and multiplicity-free categories).

Let $(A,+)$ be a finite abelian group with $|A|$ elements, $\chi: A \times A \to U(1)$ a bicharacter on $A$, and $\nu$ a choice of square root $\nu = \pm1/ \sqrt{|A|} = \pm |A|^{-1/2}$. The {\em Tambara-Yamagami category} $\ty(A,\chi,\nu)$ is an algebraic object which gives rise to the following data:

\begin{itemize}
	\item a set of {\em colors} $L \stackrel{\text{def}}{=} A \sqcup \{m\}$ (where $m$ is some symbol not in $A$), together with a \emph{duality} operation $^*$ defined by $a^* \stackrel{\text{def}}{=} -a$ and $m^* \stackrel{\text{def}}{=} m$.
\item scalars $d_a$ and $d_m$ associated to every color in $L$ called {\em quantum dimensions}, as well as the {\em (global) quantum dimension} scalar $D$.  These scalars are defined as follows:
	\[ d_a \stackrel{\text{def}}{=} 1 \text{ for all } a \in A, \qquad d_m \stackrel{\text{def}}{=} +|A|^{1/2}, \qquad D \stackrel{\text{def}}{=} \sum_{x \in L} d_x^2 = 2|A|.\]
\item a set of {\em admissible triples} of colors
\[\{(a,b,-a-b),(a,m,m),(m,a,m),(m,m,a) : a,b \in A\} \subset L \times L \times L,\]
\item and a system of $\C$-valued {\em tetrahedron weights} defined below.
\end{itemize}

% Define the set of {\em colors} $L$ associated to $\ty(A,\chi,\nu)$ to be the set of elements of $A$ together with an extra element called $m$, $L := A \sqcup \{m\}$.

% To each color $x \in L$ is associated a {\em quantum dimension} $d_x \in \C$, which is a complex number. For Tambara-Yamagami categories, we have:
% \[
% \begin{aligned}
% d_a &=1 \text{ for all $a \in A$, and} \\
% d_m &= +|A|^{1/2}.
% \end{aligned}
% \]
% The {\em global quantum dimension} $D$ is defined as
% \[ D = \sum_{x \in L} d_x^2.\] 

A {\em coloring} $\theta \colon E \to L$ of a triangulation $\tri$ is an assignment of a color to each edge of $\tri$.  Given an edge $e_0$ inside of a face $f$, the orientation of $e_0$ (recall from Section \ref{ss:top} that this data is included in $\tri$) determines a direction along which to traverse the three boundary edges of $f$; call these edges $e_0, e_1$ and $e_2$, according to the order in which we first see them while doing this traversal.  Now associate a triple of colors $(\theta(e_0),\theta(e_1)^{(*)},\theta(e_2)^{(*)})$, where $\theta(e_i)^{(*)}$ means we apply the duality operation $^*$ to the label $\theta(e_i)$ if the direction of traversal around $f$ disagrees with the orientation of $e_i$. 
The coloring $\theta$ is called {\em admissible} if every such triple associated to an edge-face pair is admissible. The set of {\it admissible colorings} is denoted by $\adm_L(\tri)$.

We now describe the tetrahedral weights involved in the definition of the Tambara-Yamagami invariants.  Let $\theta$ be a coloring of $\tri$ and let $\theta_e:=\theta(e) \in L$ denote the label of an edge $e \in E$. For each tetrahedron $t \in T$ with edges $e_1,e_2, \ldots e_6$ and orientation $O(t) \in \{+,-\}$, we assign a {\it tetrahedral weight} $|t|^{O(t)}_{\theta} \in \C$.  If $\theta$ is inadmissible, then $|t|^{O(t)}_{\theta} \stackrel{\text{def}}{=} 0$. Otherwise, $\theta$ is admissible and so, as described in Appendix \ref{sec:quantumtopology}, takes on one of four possible types; taking $O(t)$ into account we then assign the tetrahedral weights as follows:

% \begin{enumerate}
% \item To each vertex $v \in V$ we shall assign the global quantum dimension $D$.
% \item To each edge $e \in E$ we assign the quantum dimension $d_{\theta_e}$.
% \item To each face $f \in F$ with edges $e_1,e_2,e_3$ we assign the factor $\sqrt{d_{\theta_{e_1}}d_{\theta_{e_2}}d_{\theta_{e_3}}}$.
% \item To each tetrahedron $t \in T$ with edges $e_1,e_2, \ldots e_6$ and orientation $O(t) = \pm$, we assign a {\it tetrahedral weight} $|t|^{O(t)}_{\theta}$, whose definition follows.
% \end{enumerate}

% It is not in general true that the tetrahedron weights of a spherical fusion category must admit full tetrahedral symmetry.  However, it is close to being true in the case of Tambara-Yamagami categories, as we only have four types of admissible tetrahedron weights for each orientation %(it is a straightforward calculation to check all others are equal to one of these):
% (one can check that all others are equal to one of these):
\[
\begin{array}{cccc}
\text{{\bf $m$-empty}} &&& \text{{\bf $m$-triangle}}\\
\sixj{a}{b}{a+b}{b-c}{a+c}{c}^+ = 1 &&& \sixj{m}{m}{a}{a}{a+b}{m}^+ = d_m\\
\\
\sixj{a}{b}{a+b}{c-b}{a+c}{c}^- = 1 &&& \sixj{a}{b}{a+b}{m}{m}{m}^- = d_m \\
\\
\text{{\bf flat $m$-quad}} &&& \text{{\bf crooked $m$-quad}}\\
\sixj{a}{m}{m}{b}{m}{m}^+ = d_m \chi(a,b)  &&& \sixj{m}{m}{a}{m}{m}{b}^+ = \nu d_m^2 \overline{\chi(a,b)} \\
\\
\sixj{a}{m}{m}{b}{m}{m}^- = d_m \overline{\chi(a,b)}&&& \sixj{m}{m}{a}{m}{m}{b}^- = \nu^{-1} d_m^2\chi(a,b),\\
\end{array}
\]
where $a,b,c$ are arbitrary elements in $A$.

Finally, the Tambara-Yamagami quantum invariant $|M|_{\ty(A,\chi,\nu)}$ of a 3-manifold $M$ with triangulation $\tri$ is defined via the following ``state sum formula:" 
\[
%|M|_{\mathcal{C},\tri, \{I_t\}_{t \in T}} = 
	|M|_{\ty(A,\chi,\nu)} \stackrel{\text{def}}{=} 
  \displaystyle\sum_{\theta \in \adm_L(\tri)} 
    \frac{ 
      \displaystyle\prod_{t \in T} 
      %\sixj{\theta_{e_1}}{\theta_{e_2}}{\theta_{e_3}}{\theta_{e_4}}{\theta_{e_5}}{\theta_{e_6}} 
      |t|_\theta^{O(t)}
      \prod_{e \in E} d_{\theta{e}} 
      }
      {
      \displaystyle\prod_{f \in F} \sqrt{d_{\theta{e_1}}d_{\theta{e_2}}d_{\theta{e_3}}} \prod_{v \in V}D.
      }
\]
We note that for a fixed coloring $\theta$ and tetrahedron $t$, testing its admissibility on $t$ and determining its type can be carried out in constant time.  Thus, there is a na\"ive algorithm to compute $|M|_{\ty(A,\chi,\nu)}$ directly from this definition with running time exponential in the size of $\tri$. In fact, this is true of $|M|_{\calC}$ for any spherical fusion category $\calC$. We now turn to showing that we can do much better when $\calC=\ty(A,\chi,\nu)$ if we bound the $\Z/2\Z$ Betti number $\beta_1(M)$.

\section{The role of the Betti number in Tambara-Yamagami invariants}
\label{sec:fptalgorithm}
%!TEX root = main.tex 

We are interested in the following algorithmic problem:
\begin{center}
\begin{tabular}{|l|}
\hline
{\sc $\ty(A,\chi,\nu)$-invariant computation}\\
\hline
{\bf Input:} $\tri$ a triangulation of an oriented closed 3-manifold $\M$\\
{\bf Output:} $|\M|_{\ty(A,\chi,\nu)}$\\
\hline
\end{tabular}
\end{center}
\noindent
Since we are not working uniformly in the data $A,\chi,\nu$, there is no harm in assuming we have full ``explicit'' access to it.\footnote{Nevertheless, we note that all we really need is access to the primary decomposition of $A$ and oracle access to a function $(x,y) \mapsto b(x,y) \in \Q/\Z$ such that $\chi(x,y) = \e(b(x,y))$, for any $x,y \in A$.  In fact, using \cite[Thm.~1.1]{cai2010tractable}, it should be possible to relax the former in the case that $A=\Z/N\Z$ is cyclic.  In other words, it should not be necessary to assume we know the prime factorization of $N$.}
Our main result is 
\begin{theorem}
\label{thm:maincpx}
For a fixed Tambara-Yamagami category $\ty(A,\chi,\nu)$, there is a deterministic algorithm that, given as input a triangulation of an orientable closed 3-manifold $\M$ with $n$ tetrahedra, computes the quantum invariant $|\M|_{\ty(A,\chi,\nu)}$ in $O( 2^{\beta_1} n^3 )$ operations and using $O(n^2)$ memory words. 
\end{theorem}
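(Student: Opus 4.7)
The plan is to reorganize the TVBW state sum over colorings $\theta: E \to L = A \sqcup \{m\}$ according to the subset $E_m \subseteq E$ of edges carrying the label $m$, and then recognize each inner sum over the remaining $A$-labels as a quadratic Gauss sum amenable to Theorem~\ref{thm:evaluategausssum}. First I would apply Theorem~\ref{thm:jacorubinstein} to reduce in linear time to one-vertex triangulations of prime summands, using the multiplicativity statement of Theorem~\ref{thm:toppropty} to reassemble $|\M|_{\ty(A,\chi,\nu)}$. For a one-vertex triangulation, the $\Z/2\Z$ coboundary $\delta^0: C^0(\tri) \to C^1(\tri)$ vanishes (every edge has head = tail), so $Z^1(\tri,\Z/2\Z) = H^1(\tri,\Z/2\Z)$ has dimension exactly $\beta_1$, and Theorem~\ref{thm:computabilitycohom} produces a basis in $O(n^3)$ operations.

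The key topological observation is that the Tambara-Yamagami fusion rules admit at a trivalent vertex only the configurations $(a,b,a+b)$ and $(a,m,m)$ (up to duality), so at every triangular face of any admissible coloring the number of $m$-labels is $0$ or $2$. Thus the indicator of $E_m$ is a $1$-cocycle in $Z^1(\tri,\Z/2\Z)$; conversely, every such cocycle extends to an admissible coloring. This splits the state sum into a sum over exactly $2^{\beta_1}$ cocycles. Fix one: the $A$-labelings of $E_A := E \setminus E_m$ that make $\theta$ globally admissible form the subgroup $Z_{E_m}(A) \le A^{|E_A|}$ cut out by the homogeneous linear equations $\pm\theta(e_1) \pm\theta(e_2) \pm\theta(e_3) = 0$ at each pure-$A$ face, the signs being determined by the fixed edge orientations of $\tri$. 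Its primary decomposition is computed by Theorem~\ref{thm:algoabeliangroup}.

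Reading the tetrahedral weight table, pure-$A$ tetrahedra contribute $1$, $m$-triangle tetrahedra contribute a constant factor involving $d_m$, and each $m$-quad tetrahedron $t$ contributes a constant times $\chi(a_t,b_t)^{\epsilon_t}$ for two specific $A$-edges and a sign $\epsilon_t$ read off from its type (flat/crooked) and its orientation. Writing $\chi = \e \circ b$ with $b: A\times A \to \Q/\Z$ symmetric bilinear, the entire $E_m$-block of the state sum factors as a constant $C(E_m)$, computable in $O(n)$ from the combinatorics of $(E_m, V, E, F, T)$, times
\[
\sum_{\theta \in Z_{E_m}(A)} \e\bigl(q_{E_m}(\theta)\bigr), \qquad q_{E_m}(\theta) := \sum_{t\ m\text{-quad}} \epsilon_t\, b\bigl(\theta_{e_t^1}, \theta_{e_t^2}\bigr).
\]
A direct check using symmetry and bilinearity of $b$ shows $q_{E_m}(u\theta) = u^2 q_{E_m}(\theta)$ and that $\partial q_{E_m}$ is bilinear, so $q_{E_m}$ is a $\Q/\Z$-valued quadratic form on $Z_{E_m}(A)$, and the inner sum equals $\sqrt{|Z_{E_m}(A)|}\cdot \Theta(Z_{E_m}(A), q_{E_m})$, evaluated by Theorem~\ref{thm:evaluategausssum}.

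For the complexity, since $\ty(A,\chi,\nu)$ is fixed, $d(A) = O(1)$, so $Z_{E_m}(A)$ has $O(n)$ generators and $\log|Z_{E_m}(A)| = O(n)$; each Gauss sum then costs $O(n^3)$ operations and $O(n^2)$ memory, with the oracle for $q_{E_m}$ evaluable in $O(n)$ by iterating over tetrahedra. Summing over the $2^{\beta_1}$ cocycles yields the announced $O(2^{\beta_1} n^3)$ bound. The main obstacle I anticipate is bookkeeping: rigorously tracking how the signs $\epsilon_t$ and the orientation-induced signs in the defining equations of $Z_{E_m}(A)$ interact, and verifying that the collected normalization factors (vertex $D$'s, edge $d_\bullet$'s, face $\sqrt{d_\bullet d_\bullet d_\bullet}$'s, together with the $d_m$ and $\nu$ contributions of $m$-triangle and $m$-quad tetrahedra) recombine into exactly $C(E_m)\sqrt{|Z_{E_m}(A)|}$ so that the reorganized expression matches $|\M|_{\ty(A,\chi,\nu)}$ on the nose.
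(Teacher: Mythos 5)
Your proposal follows essentially the same route as the paper's proof: Jaco--Rubinstein crushing plus multiplicativity under connected sum, the observation that the $m$-edge indicator of an admissible coloring is a $\Z/2\Z$ 1-cocycle (with exactly $2^{\beta_1}$ cocycles in a one-vertex triangulation), identification of the admissible $A$-labelings at a fixed cocycle with the kernel of a linear map on $A^{|E_A|}$ computed via Smith normal form, and evaluation of each partial state sum as a constant times a quadratic Gauss sum handled by Theorem~\ref{thm:evaluategausssum}. The only point to tighten is the final complexity count: quoting Theorem~\ref{thm:evaluategausssum} with $\log|Z_{E_m}(A)| = O(n)$ would naively introduce an extra $\poly(n)$ factor, so one must observe (as the paper does) that the arithmetic in the Gram-matrix reduction involves denominators bounded by the exponent of the fixed group $A$ and hence costs $O(1)$ per operation.
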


In this section, we make precise the ideas described in Section \ref{sec:introduction} that lead to such an algorithm.  We put everything together and prove Theorem \ref{thm:maincpx} in Section \ref{sec:complexity}.  For the rest of this section and the next, $A, \chi,$ and $\nu$ are fixed.

\subsection{Admissible colorings induce simplicial 1-cocycles}
\label{ss:cocyle}
We begin by noting that each admissible coloring determines a $\Z/2\Z$-valued simplicial 1-cocycle.  More precisely, consider the function
\begin{equation}
\label{eq:projphi}
  \phi\colon \adm_L(\tri) \to \Chain^1(\tri, \Z/2\Z), \ \ \ \ 
              \phi(\theta)(e) = \left\{ \begin{array}{ll}
                                       1 & \text{if } \theta(e) = m\\
                                       0 & \text{otherwise}\\
                                       \end{array} \right.
\end{equation}
that assigns to an admissible coloring of $\tri$ a $\Z/2\Z$-valued simplicial 1-cochain of the triangulation (that is, a function $E \to \Z/2\Z$). 

\begin{lemma}
\label{lem:onecocycle}
For any admissible coloring $\theta$ of a triangulation $\tri$, $\phi(\theta)$ is a 1-cocycle. \qed
\end{lemma}

In the case of $\ty(\Z/2\Z,\exp(i\pi a b), -1/\sqrt{2}) = \tv_4$, this lemma was proved in \cite{BurtonMS18,MariaS16} and used in the algorithm of \cite{MariaS20}. It can be proved easily in the general case by examining the definition of admissible triples in $\ty(A,\chi,\nu)$---the calculation boils down to the simple observation that in any admissible coloring, the number of edges around any face of $\tri$ colored by $m$ is either 0 or 2.  More conceptually, this lemma follows from the fact that $\ty(A,\chi,\nu)$ is not just a spherical fusion category---it is a spherical \emph{$\Z/2\Z$-graded} fusion category, and a similar result holds for any spherical $G$-graded fusion category, cf.~\cite{turaev20123}.

\subsection{The partial state sum at a 1-cocycle} 
\label{ss:partial}
Fix a 1-cocycle $\alpha \in Z^1(\tri,\Z/2\Z)$ and consider the {\em partial state sum at $\alpha$}, defined by only summing over the admissible labelings associated to $\alpha$: 
\[
	|\tri,\alpha|_{\ty(A,\chi,\nu)} \stackrel{\text{def}}{=}
  \displaystyle\sum_{\theta \in \phi^{-1}(\alpha)} 
    \frac{ 
      \displaystyle\prod_{t \in T} |t|^{O(t)}_\theta
      \prod_{e \in E} d_{\theta{e}} 
      }
      {
      \displaystyle\prod_{f \in F} \sqrt{d_{\theta{e_1}}d_{\theta{e_2}}d_{\theta{e_3}}} \prod_{v \in V}D
      }.
\]

By definition of $\phi$, all admissible colorings of $\phi^{-1}(\alpha)$ have the same set of edges colored by the object $m$. We partition edges, faces, and tetrahedra of $\tri$, such that for every coloring in $\phi^{-1}(\alpha)$:
\begin{enumerate}
\item $E = \Eemp \sqcup \Eone$ where $\Eemp$ is the set of edges colored by an element of $A$, and $\Eone$ is the set of edges colored by $m$,
\item $F = \Femp \sqcup \Ftwo$ where $\Femp$ is the set of faces whose edges are colored exclusively by elements of $A$, and $\Ftwo$ is the set of faces with exactly two edges colored by $m$,
% \item $T = \Temp \sqcup_{i=1,2} \Ttri{i} \sqcup_{j=1, \ldots, 6} \Tqua{j}$ where $\Temp$ is the set of $m$-empty faces, $\Ttri{i}$ is the set of $m$-triangles of type $i \in \{1,2\}$, and $\Tqua{j}$ is the set of $m$-quad of type $j \in \{1, \ldots, 6\}$. 
\item $T = \Temp \sqcup \Ttriang \sqcup \Tquadplus \sqcup \Tquadminus \sqcup \Tcrookquadplus \sqcup \Tcrookquadminus$, where $\Temp$ is the set of $m$-empty faces, $\Ttriang$ is the set of $m$-triangles, $\Tquadplus$ and $\Tquadminus$ are respectively the set of positively and negatively oriented flat $m$-quads, and $\Tcrookquadplus$ and $\Tcrookquadminus$ are respectively the set of positively and negatively oriented crooked $m$-quads (see end of Section~\ref{sec:tyquantuminvariant} for conventions). 
\end{enumerate}

Replacing $|t|^{O(t)}_\theta$ by the appropriate values and factorizing, we now rewrite the partial state sum at the 1-cocycle $\alpha$. 
%
% Define the integers: \marginclem{simplify notations}
% \[
%   \begin{array}{ccl}
%   (\alpha)_1 & := & |\Eone|+|\Tqua{1}|+|\Tqua{3}|+|\Tqua{5}|+|\Tqua{6}|\\
%   (\alpha)_2 & := & |\Ftwo|+|\Tqua{2}|+|\Tqua{4}|+|\Tqua{5}|+|\Tqua{6}|\\
%   (\alpha)_3 & := & |\Tqua{6}|-|\Tqua{5}|\\
%   \end{array}
% \]
For an admissible coloring $\theta$ and an $m$-quad tetrahedron $t$ in $\theta$ (either flat or crooked), define $\chi_{\theta(t)}$ to be the value $\chi(a,b)$ where $a,b$ are the colors in $\theta$ assigned to the two opposite edges not colored by $m$ in the (flat or crooked) $m$-quad tetrahedron $t$.  With this notation, one can check:
\begin{equation}
\label{eq:partialstatesum}
	|\tri,\alpha|_{\ty(A,\chi,\nu)} =
  % \cancel{
    % \displaystyle D^{-|V|} \cdot d_m^{(\alpha)_1} \cdot |d_m|^{(\alpha)_2} \cdot \tau^{(\alpha_3)}\\
    % && \ \ \displaystyle\times \displaystyle\sum_{\theta \in \phi^{-1}(\alpha)} \ \ \ \prod_{t \in \Tqua{1} \sqcup \Tqua{2} \sqcup \Tqua{5}} \chi^{\theta(t)}(a,b) \ \ \ \prod_{t \in \Tqua{3} \sqcup \Tqua{4} \sqcup \Tqua{6}} \overline{\chi^{\theta(t)}(a,b)}\\
    D^{-|V|} \cdot d_m^{C_\alpha} \cdot 
	\nu^{|\Tcrookquadplus|-|\Tcrookquadminus|}  
	\sum_{\theta \in \phi^{-1}(\alpha)}
	\left(
    \prod_{t \in \Tquadplus} {\chi_{\theta(t)}} \cdot
    \prod_{t \in \Tquadminus} \overline{\chi_{\theta(t)}} \cdot
    \prod_{t \in \Tcrookquadplus} \overline{\chi_{\theta(t)}} \cdot
    \prod_{t \in \Tcrookquadminus} {\chi_{\theta(t)}}
    \right)
    \end{equation}
where:
\[ C_\alpha = |\Eone|-|\Ftwo|+|\Tquadplus|+|\Tquadminus| + 2|\Tcrookquadplus|+2|\Tcrookquadminus|.\]
% \eric{There was a sum over $\phi^{-1}(\alpha)$ missing here, some confusion between the different tetrahedral weights, and a $\tau$ that should have been a $\nu$. Also, I needed to correct the factors out front, as I was missing some $d_m^2$'s in the crooked tetrahedra weights.  I got rid of the $\chi_{\theta(t)}(a,b)$ notation, because the $(a,b)$ was superfluous and would maybe just be confusing.  I think it should be correct now.}

In Section~\ref{subsec:evaluation} we prove:
\begin{theorem}
\label{thm:polytime_subroutine}
Fix a category $\ty(A,\chi,\nu)$. Let $e_1, \ldots, e_{|E|}$ be an arbitrary ordering of the edges of $\tri$. For any collections $\calP, \calQ$ of pairs of indices $(i,j)$, $1 \leq i,j \leq |E|$, with potentially repeated pairs, such that $|\calP|,|\calQ| \in O(n)$, there 
%is a polynomial time algorithm in the size of the triangulation 
is an algorithm to evaluate the sum:
\begin{equation}
\label{eq:genericpartialstatesum}
\sum_{\theta \in \phi^{-1}(\alpha)} \prod_{(i,j) \in \calP} \chi(\theta_{e_i},\theta_{e_j}) \prod_{(r,s) \in \calQ} \overline{\chi(\theta_{e_r},\theta_{e_s})}
\end{equation}
in $O(n^3)$ operations, using $O(n^2)$ memory words.
\end{theorem}
\noindent
Using the theorem, we arrive at a key step towards our main result.

\begin{corollary}
\label{cor:polytimesubroutine}
Fix a category $\ty(A,\chi,\nu)$. Given as input a 3-manifold triangulation $\tri$ with $n$ tetrahedra, and a 1-cocycle $\alpha$ in $\tri$, the sum~(\ref{eq:partialstatesum}) can be evaluated in $O(n^3)$ operations, using $O(n^2)$ memory words.%polynomial time in the size of the triangulation.
\end{corollary}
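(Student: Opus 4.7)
The plan is to reduce (\ref{eq:partialstatesum}) directly to a single invocation of Theorem~\ref{thm:polytime_subroutine}, preceded by $O(n)$ bookkeeping work and the evaluation of a scalar prefactor. Since the category $\ty(A,\chi,\nu)$ is fixed, the quantities $D$, $d_m$, and $\nu$ are constants, so the only thing that actually depends on $\alpha$ and $\tri$ is the combinatorial data extracted below.

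First, I would classify each tetrahedron of $\tri$ according to the partition $T = \Temp \sqcup \Ttriang \sqcup \Tquadplus \sqcup \Tquadminus \sqcup \Tcrookquadplus \sqcup \Tcrookquadminus$. Because the class of a tetrahedron depends only on the cocycle values $\alpha(e)$ on its six edges together with the sign $O(t)$, this classification is $O(1)$ per tetrahedron, and $O(n)$ overall. A single pass through the tetrahedra also yields the counts $|\Eone|$, $|\Ftwo|$, $|\Tquadplus|$, $|\Tquadminus|$, $|\Tcrookquadplus|$, $|\Tcrookquadminus|$, and hence $C_\alpha$; the vertex count $|V|$ is read directly from $\tri$. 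The prefactor $D^{-|V|} \cdot d_m^{C_\alpha} \cdot \nu^{|\Tcrookquadplus|-|\Tcrookquadminus|}$ is then a fixed algebraic number which can be assembled in $O(\poly(\log n))$ additional operations.

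Next, I would build the multisets of index pairs $\calP$ and $\calQ$ required by Theorem~\ref{thm:polytime_subroutine}. For each $t$ in one of the four $m$-quad classes, exactly two edges of $t$ carry non-$m$ labels: in the flat model $\sixj{a}{m}{m}{b}{m}{m}^{\pm}$ these are the opposite edges labelled $a$ and $b$, and similarly for the crooked model $\sixj{m}{m}{a}{m}{m}{b}^{\pm}$. Let $(i_t, j_t)$ be the indices of these two edges in a fixed ordering of $E$. I append $(i_t, j_t)$ to $\calP$ when $t \in \Tquadplus \cup \Tcrookquadminus$, and to $\calQ$ when $t \in \Tquadminus \cup \Tcrookquadplus$; the ordering within the pair is immaterial because $\chi$ is symmetric. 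Each tetrahedron contributes at most one pair, so $|\calP|, |\calQ| \le n$ and the construction runs in $O(n)$ time.

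Matching this against (\ref{eq:partialstatesum}), the bracketed sum over $\theta \in \phi^{-1}(\alpha)$ is precisely
\[
\sum_{\theta \in \phi^{-1}(\alpha)} \prod_{(i,j) \in \calP} \chi(\theta_{e_i},\theta_{e_j}) \prod_{(r,s) \in \calQ} \overline{\chi(\theta_{e_r},\theta_{e_s})},
\]
which Theorem~\ref{thm:polytime_subroutine} evaluates in $O(n^3)$ operations and $O(n^2)$ memory words. Multiplying by the precomputed prefactor yields $|\tri,\alpha|_{\ty(A,\chi,\nu)}$ within the same asymptotic budget. I expect no genuine obstacle here: this corollary is essentially a bookkeeping reduction, and the only subtle point is verifying that the classification of tetrahedra and the identification of the opposite non-$m$ edges can be done in constant time per tetrahedron; the substantive work is deferred to Theorem~\ref{thm:polytime_subroutine}.
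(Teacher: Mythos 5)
Your proposal is correct and follows essentially the same route as the paper: evaluate the prefactor of Equation~(\ref{eq:partialstatesum}) in linear time from the $m$-colored edges determined by $\alpha$, assemble the pair multisets $\calP$ and $\calQ$ from the $m$-quad tetrahedra, and defer the substantive work to Theorem~\ref{thm:polytime_subroutine}. Your writeup is in fact more explicit than the paper's (which leaves the construction of $\calP$ and $\calQ$ and the matching of conjugated versus unconjugated factors implicit), and your assignment of $\Tquadplus \cup \Tcrookquadminus$ to $\calP$ and $\Tquadminus \cup \Tcrookquadplus$ to $\calQ$ correctly matches the conjugation pattern in Equation~(\ref{eq:partialstatesum}).
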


\begin{proof}
By definition of the projection $\phi$ (Equation~(\ref{eq:projphi})), the 1-cocycle $\alpha$ gives a description of the edges colored by the object $m$ in all admissible colorings of $\phi^{-1}(\alpha)$. Additionally, the multiplicative constant $D^{-|V|} \cdot d_m^{C_\alpha} \cdot \nu^{|\Tcrookquadplus|-|\Tcrookquadminus|}$ in front of the sum (Equation~(\ref{eq:partialstatesum})) depends exclusively on the set of edges colored by $m$. In consequence, the multiplicative constant can be evaluated in linear time by checking the set of $m$-colored edges of each tetrahedron. Using the notations of Theorem~\ref{thm:polytime_subroutine}, we consider the pairs $\calP$ (resp. $\calQ$) of indices $(i,j)$ of opposite edges $(e_i,e_j)$ not colored by $m$ in tetrahedra of type $\Tquadplus \sqcup \Tcrookquadminus$ (resp. $\Tquadminus \sqcup \Tcrookquadplus$), in order to evaluate the sum $\sum_{\theta \in \phi^{-1}(\alpha)}
  \left(
    \prod_{t \in \Tquadplus} {\chi_{\theta(t)}} \cdot
    \prod_{t \in \Tquadminus} \overline{\chi_{\theta(t)}} \cdot
    \prod_{t \in \Tcrookquadplus} \overline{\chi_{\theta(t)}} \cdot
    \prod_{t \in \Tcrookquadminus} {\chi_{\theta(t)}}
    \right)$ in cubic time. The corollary follows.
\end{proof}

\subsection{Admissible colorings over a 1-cocycle can be efficiently characterized as a finite abelian group}
\label{sec:phiminusoneasagroup}
For a given 1-cocycle $\alpha \in Z^1(\tri;\Z/2\Z)$, we now characterize the set $\phi^{-1}(\alpha) \subseteq \adm(\tri)$. Recall that $\Eemp$ is the set of edges not colored by $m$ and $\Femp$ is the set of faces with no edge colored by $m$.

% We partition further the set of edges of the triangulation:
% \begin{itemize}
% \item $E = E_1 \sqcup E_2 \sqcup E_m$, where $E_1$ is the set of edges contained at least one $m$-empty face, $E_2$ is the set of edges not colored by $m$ but exclusively contained in faces with two other edges colored by $m$, and $E_m$ is the set of edges colored by $m$.
% \end{itemize}

\begin{lemma}
\label{lem:kernel}
If the finite abelian group $A$ is given by its primary decomposition with $d(A)$ summands, and the triangulation $\tri$ has $n=|\tri|$ tetrahedra, then the set $\phi^{-1}(\alpha)$ is in bijection with the elements of a finite abelian group $G$ with $O(d(A)n)$ generators. 
\end{lemma}

\begin{proof}
Fix an arbitrary orientation of the edges of the triangulation. The admissibility constraints for colorings in a Tambara-Yamagami category imply that the set $\phi^{-1}(\alpha)$ is in bijection with the set of assignments $\theta \colon \Eemp \to A$ satisfying the following linear equations: for any $m$-empty face with edges $e_1,e_2,e_3$, %the map $\theta$ satisfies:
$$\varepsilon_1\theta(e_1) + \varepsilon_2\theta(e_2) + \varepsilon_3\theta(e_3) = 0,$$ where $\varepsilon_i \in \{\pm1\}$ depending whether the orientations of edges $e_1,e_2,e_3$ agree or not with an arbitrarily chosen orientation of the triangular face. 

Consider the group homomorphism:
\[
  \lambda\colon A^{|\Eemp|} \to A^{|\Femp|}
\]
where the $|\Femp|$-many values are the relations $\varepsilon_1\theta(e_1) + \varepsilon_2\theta(e_2) + \varepsilon_3\theta(e_3)$ above. 

% \clem{rephrase: This morphism can be represented by a $|\Femp| \times |\Eemp|$ integer matrix where, for any triangular face $f \in \Femp$ with edges $e_1, e_2, e_3 \in \Eemp$, the row corresponding to $f$ contains a $1$ in the column indices corresponding to $e_1$ and $e_2$, and a $-1$ in the column index corresponding to $e_3$.}

The set $\phi^{-1}(\alpha)$ is in bijection with the kernel of this application. Because $A^{|\Eemp|}$ is a finitely generated abelian group with $d(A) |\Eemp|$ summands in its primary decomposition, with $|\Eemp| \leq |E| \in O(n)$, then $\ker \lambda$ is a finitely generated abelian group with $O(d(A)n)$ generators. 
\end{proof}

Additionally, a minimal family of generators $(y_1, \ldots, y_\ell), y_i \in A^{|\Eemp|}$, for $\ker \lambda$ can be computed in polynomial time.

\begin{lemma}
\label{lem:minfamgeneratorskerl}
For a fixed abelian group $A$, with the notations above, a minimal family of generators for $\ker \lambda$ can be computed in $O(n^3)$ operations, using $O(n^2)$ memory words.
% \medskip
% More specifically, $O( ( d(A)|\tri| )^\theta M(d(A)|\tri|
% ) )$ bit operations are sufficient, where $\theta < 2.38$ is the exponent for matrix multiplication over the integers, and $M(t) = O(t \log t)$ is the number of bit operations necessary to multiply two $t$-bits integers. \clem{already defined / double check complexity}
\end{lemma}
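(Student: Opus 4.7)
The plan is to reduce the computation of $\ker \lambda$ to standard linear algebra over the primary components of $A$, and then clean up using Theorem~\ref{thm:algoabeliangroup} to obtain a genuinely minimal family of generators.

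First I would make the matrix representation of $\lambda$ explicit. Order the edges in $\Eemp$ and the faces in $\Femp$ arbitrarily. For each $m$-empty face $f$ with boundary edges $e_{i_1}, e_{i_2}, e_{i_3}$ and associated signs $\varepsilon_1, \varepsilon_2, \varepsilon_3 \in \{\pm 1\}$ (determined by the arbitrary orientation of $f$), record one row of an integer matrix $M$ of dimensions $|\Femp| \times |\Eemp|$ with three non-zero entries $\varepsilon_1, \varepsilon_2, \varepsilon_3$ in columns $i_1, i_2, i_3$. Reducing modulo $|A|$, this matrix represents $\lambda$ in the standard generators of $A^{|\Eemp|}$ and $A^{|\Femp|}$. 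Since $|\Eemp|, |\Femp| \in O(n)$ and each entry is of constant size, $M$ fits in $O(n^2)$ memory words.

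Next I would compute the kernel by exploiting the primary decomposition. Writing $A \cong \bigoplus_{p} A_{(p)}$, we have $\ker \lambda = \bigoplus_{p} \ker(\lambda|_{A_{(p)}^{|\Eemp|}})$, a finite direct sum since $A$ is fixed. For each prime $p$ dividing $|A|$, let $p^{k_{\max}}$ be the exponent of $A_{(p)}$. Over the local ring $\Z/p^{k_{\max}}\Z$, the Smith normal form of $M$ can be computed by the usual PID algorithm in $O(n^3)$ elementary operations (each of constant cost, since $|A|$ is fixed), using $O(n^2)$ memory; from the Smith normal form one reads off, via the standard recipe, a family of $O(n)$ elements of $A_{(p)}^{|\Eemp|}$ generating $\ker(\lambda|_{A_{(p)}^{|\Eemp|}})$ and expressed as integer combinations of the standard generators. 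Assembling across the $O(1)$ primes dividing $|A|$ yields an $O(n)$-element generating family for $\ker \lambda$, still expressed in the basis of $A^{|\Eemp|}$.

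Finally I would invoke Theorem~\ref{thm:algoabeliangroup} with $H = A^{|\Eemp|}$ (whose primary decomposition has $d(A)\cdot|\Eemp| \in O(n)$ summands) and $G = \ker\lambda$ presented by the $O(n)$ generators just produced. This returns the primary decomposition of $\ker \lambda$ together with a \emph{minimal} generating family, in $O(d(H)^3) = O(n^3)$ operations. The total work is $O(n^3)$ and the total memory $O(n^2)$, as claimed. The one nontrivial technical point I anticipate is the Smith normal form step: because we are working over $\Z/p^{k}\Z$ rather than a field, naive Gaussian elimination is insufficient and one must use the PID version that tracks pivot divisibility. This is classical, and the resulting bound on the number of ring operations is the standard $O(n^3)$; everything else is book-keeping between primary components and a single call to Theorem~\ref{thm:algoabeliangroup}.
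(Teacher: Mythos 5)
Your proposal is correct and follows essentially the same route as the paper: represent $\lambda$ by the integer matrix of face relations, compute a Smith normal form with arithmetic reduced modulo the (fixed) group order to avoid coefficient blow-up, and read off $O(n)$ kernel generators in $O(n^3)$ operations and $O(n^2)$ memory. The only cosmetic differences are that the paper performs a single reduction over $\Z/|A|\Z$ on the $(d(A)|\Femp|)\times(d(A)|\Eemp|)$ integer matrix and reads a minimal generating family directly from the rightmost columns of the unimodular transformation $U$, whereas you split by primary components first and then enforce minimality with an extra call to Theorem~\ref{thm:algoabeliangroup}; both fit in the stated bounds.
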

% \marginclem{do we get the multiplier matrices with Storjohann/the generators of the kernel?}

\begin{proof}
A minimal family of generators of $\ker \lambda$ can be computed using a Smith normal form algorithm on the integer matrix representing the homomorphism $\lambda$ on a canonical set of generators of products of $A$s. 

Write $d$ for $d(A)$, and let $x_1, \ldots, x_d$ be a fixed family of generators for $A$. Denote by $(x_{ij})_{i=1\ldots d, j=1 \ldots k}$ a family of generators for $A^k$, where the subfamily $(x_{ij})_{i=1\ldots d}$ are the generators $x_1, \ldots, x_d$ for the $j$th copy of $A$ in $A^k$. Do so for $A^{|\Eemp|}$ and $A^{|\Femp|}$. Note that in our framework, $A$ is given by its primary decomposition, and we get automatically a canonical set of generators as required.

Let $M$ be the $(d|\Femp|) \times (d|\Eemp|)$ integer matrix representing the group morphism $\lambda$ expressed in the above set of generators. By definition of $\lambda$, it is a matrix with coefficients in $\{-1,0,+1\}$. Let $D = V M U$ be its Smith normal form, where $U$ and $V$ are unimodular, and $D$ is a diagonal matrix, with $D_{ii} \neq 0$ \ifff $i = 1 \ldots |\Eemp|-\ell$, and $\ell$ is the dimension of the kernel of $\lambda$. Such decomposition can be computed in $O(n^3)$ operations in $\Z$ with Gaussian eliminations, considering $d(A)$ constant and $|\Eemp|, |\Femp| \in O(n)$.

The rightmost $\ell$ columns of matrix $U$ encode, in the $(x_{ij})$ set of generators of $A^{|\Eemp|}$, a minimal set of generators for the kernel of $\lambda$.

To prevent an arithmetic blow-up in the coefficients during the reduction, all arithmetic operations can be performed in $\Z/|A|\Z$, considering that any group element in $\ker \lambda$ has order at most $|A|$. For a fixed group $A$, coefficients are constant size and arithmetic operations are constant time (they are bounded by small polynomials in $\log |A|$).
%
%
% Let $M$ be the $|\Femp|\times|\Eemp|$ integer matrix associated to $\lambda\colon A^{|\Eemp|} \to A^{|\Femp|}$, and 
%
% The explicit upper bounds are given using Alman and Williams~\cite{AlmanW21} for the exponent $\theta$, Storjohann's algorithm for computing the Smith normal form of integer matrices~\cite{Storjohann96}, and Harvey and Joris van der Hoeven~\cite{HarveyvdH21} for integer multiplication. \marginclem{be more explicit here.}
\end{proof}

In conclusion, any generator of the kernel $\ker \lambda \subseteq A^{|\Eemp|}$ can be interpreted as an assignment of colors to the empty edges of the triangulation $\Eemp \to A$. For a fixed 1-cocycle $\alpha$ of the triangulation, any admissible coloring in $\phi^{-1}(\alpha)$ can be expressed as a sum (with integer coefficients) of elements of the kernel (where the sum of colors is edge-wise) following the rule of addition in $A$.

% \bigskip

\subsection{Partial state sums are (normalized) Gauss sums} 
\label{subsec:evaluation}
The goal of this subsection is to prove Theorem \ref{thm:polytime_subroutine}.
Consider a finite abelian group $(A,+)$ with $d := d(A)$ generators $x_1, \ldots, x_d$ for the summands of its primary decomposition, its product $A^k$ with generators $(x_{ij})_{i=1\ldots d, j=1\ldots k}$, and a subgroup $(G,+)$ with $g := d(G)$. % Colleen: d(G)=g? What?

For a group element $y \in G$, we express $y = \sum_{i,j} n_{ij}(y) x_{ij}$, for integer coefficients $n_{ij}(y)$, in the set of generators for $A^k$. We also write $y^{(j)} := \sum_{i = 1 \ldots d} n_{ij}(y) x_i \in A$ for the projection of $y$ into the $j$-th component of the product $A^k$.

Let $\calP,\calQ$ be arbitrary sets of pairs of indices $(i,j)$, with $1 \leq i,j \leq k$, and possible multiplicities $p_{ij} \geq 1$ for the pair $(i,j)\in \calP$, and $q_{ij}$ for the pair $(i,j) \in \calQ$.
Consider the product: 

% In general, let $A$ be a finite abelian form with minimal number of generators $d(A)$, and $\chi: A\times A \to U(1)$ a bicharacter form. Consider a finite abelian group $G = \langle x^{(1)}, \ldots, x^{(k)} \rangle$, seen as a subgroup of the product $A^k$, $k > 0$, and write each generator $x^{(i)} = (x_1^{(i)}, \ldots , x_{d(A)}^{(i)})$, $1 \leq i \leq k$, with $x_\ell^{(i)} \in A$, in the coordinate system given by a family of generators for $A$. 

% Let $\calP, \calQ$ be arbitrary collections of indices $(i,j)$ with $i,j \in \{1, \ldots, k\}$, with possible duplicates. Consider the form:
\begin{equation}
\label{eq:form}
  \psi \colon G \to U(1), \ \ \ \  
  % (x^{(1)}, \ldots, x^{(k)}) \mapsto \displaystyle \prod_{(i,j)\in \calP} \chi(x^{(i)},x^{(j)})\prod_{(k,\ell)\in \calQ} \chi(x^{(k)},x^{(\ell)})^{-1}.
  y \mapsto \displaystyle \prod_{(i,j)\in \calP} \chi(y^{(i)},y^{(j)})^{p_{ij}}\prod_{(r,s)\in \calQ} \chi(y^{(r)},y^{(s)})^{-q_{rs}}.
\end{equation}
and define the complex valued sum:
\begin{equation}
\label{eq:gausssum}
\Omega(\calP,\calQ) := \displaystyle\frac{1}{|\sqrt{|G|}|}\sum_{y \in G} \psi(y).
\end{equation}

\begin{lemma}
\label{lem:quadraticgausssum}
The sum $\Omega(\calP,\calQ)$ in Equation~(\ref{eq:gausssum}) is a {\em quadratic Gauss sum} over $G$, \ie, there exists a quadratic form $q \colon G \to \Q/\Z$ s.t.,
\[
  \Omega(\calP,\calQ) = \Theta(G,q) = \displaystyle\frac{1}{|\sqrt{|G|}|}\sum_{x \in G} \e(q(x)).
\]
\end{lemma}

\begin{proof}
Define the pairing $b \colon A \times A \to \Q/\Z$ satisfying, for any $x,y \in A$, the equality $\chi(x,y) = \e(b(x,y))$. %By definition of a bicharacter, $b$ is a bilinear form. 
The map $b$ is a bilinear pairing. This follows from the definition of a bicharacter. For any $x,y,z \in A$,
\[
  \chi(x+y,z) = \e(b(x+y,z)) = \chi(x,z)\chi(y,z) = \e(b(x,z) + b(y,z)),
\]
and in consequence $b(x+y,z) = b(x,z)+b(y,z) \mod \Z$. By symmetry of $\chi$, bilinearity follows.

\medskip

Define $q \colon G \to \Q/\Z$ by:
\[
  q(y) := \sum_{(i,j)\in \calP} p_{ij} \cdot b(y^{(i)},y^{(j)}) - \sum_{(r,s)\in \calQ} q_{rs} \cdot b(y^{(r)},y^{(s)}),
\]
such that $\left|\sqrt{|G|}\right| \Omega(\calP,\calQ) = \sum_{x \in G} \e(q(x))$. The map $q$ is a quadratic form. % \ie, for all $\vecx,\vecy \in A^k$ and $n \in \Z$, $q(n \cdot \vecx) = n^2 q(\vecx)$ and $\partial q(\vecx,\vecy) = q(\vecx + \vecy)-q(\vecx)-q(\vecy)$ is bilinear.
Indeed, first notice that, for any integer $u$, $q(u y) = u^2 q(y)$ follows from the fact that $b(u y^{(i)}, u y^{(j)}) = u^2 b(y^{(i)}, y^{(j)})$. Second, the form:
\[
  \partial q : (y,y') \mapsto q(y + y')-q(y)-q(y')% = \sum_{(i,j)\in \calP} b(x_i,y_j)+b(y_i,x_j) - \sum_{(k,\ell)\in \calQ} b(x_k,y_\ell)+b(y_k,x_\ell),
\]
is bilinear, as it can be expressed as a sum of evaluations of the bilinear pairing $b$ on $(y^{(i)}, y'^{(j)})$ for some $i,j$. 
\end{proof}

\begin{proof}[Proof of Theorem~\ref{thm:polytime_subroutine}]
Using the notations of Theorem~\ref{thm:polytime_subroutine}, we identify the following sum on the triangulation:
\begin{equation}
\label{eq:inproofgsum}
% \sum_{\theta \in \phi^{-1}(\alpha)} \prod_{(i,j) \in \calP} \chi(\theta(e_i),\theta(e_j)) \prod_{(r,s) \in \calQ} \chi(\theta(e_r),\theta(e_s))^{-1}
\sum_{\theta \in \phi^{-1}(\alpha)} \prod_{(i,j) \in \calP} \chi(\theta_{e_i},\theta_{e_j}) \prod_{(r,s) \in \calQ} \overline{\chi(\theta_{e_r},\theta_{e_s})}
\end{equation}
with the quantity $\left|\sqrt{|G|}\right| \Omega(\calP,\calQ)$ in Equation~(\ref{eq:gausssum}), where the group $G$ is the subgroup $\ker \lambda$ of the product $A^{|\Eemp|}$. The pairs of indices $\calP$ and $\calQ$ represent indices of edges in Equation~(\ref{eq:inproofgsum}), and particular copies of $A$ in the product $A^{|\Eemp|}$ in Equation~(\ref{eq:form}), with a natural correspondence induced by the fact copies of $A$ in $A^{|\Eemp|}$ are in bijection with edges not colored by $m$ (once the 1-cocycle $\alpha$ is fixed). 

The number of elements of the finite group $G = \ker \lambda$ can be evaluated by finding the primary decomposition of $G$, which can be done in polynomial time (Theorem~\ref{thm:algoabeliangroup}). Let $q$ be the quadratic form defined in the proof of Theorem~\ref{lem:quadraticgausssum}. We can compute the Gram matrix of a minimal family of generators of $\ker \lambda$ (computed in Lemma~\ref{lem:minfamgeneratorskerl}) for the discriminant pairing $(G,\partial q)$ by evaluating $\partial q$, and compute the Gauss sum $\Omega(\calP,\calQ)$ in polynomial time (by virtue of Theorem~\ref{thm:evaluategausssum}), with the expected complexity.
\end{proof}

\subsection{Cohomologous 1-cocycles have equal partial state sums}
\label{ss:cohomologous}
The tools introduced so far in this section would only be enough to construct a parameterized algorithm in terms of the number of 1-cocycles $Z^1(\tri;\Z/2\Z)$.  The final ingredient necessary to further reduce the parameter to $\beta_1(M)$ follows from a general statement about TVBW invariants of spherical $G$-graded fusion categories (for any finite group $G$) first described by Turaev and Virelizier \cite[Thm.~7.1]{turaev20123}, specialized to our situation: $\ty(A,\chi,\nu)$ is a spherical $\Z/2\Z$-graded fusion category.

\begin{theorem}[special case of Thm.~7.1 in \cite{turaev20123}]
	\label{thm:cohomologous}
	If $\alpha$ and $\beta$ are cohomologous elements of $Z^1(\tri;\Z/2\Z)$, then
	$|\tri,\alpha|_{\ty(A,\chi,\nu)} = |\tri,\beta|_{\ty(A,\chi,\nu)}$. \hfill \qed
\end{theorem}

\section{Fixed parameter tractable algorithm in the first Betti number}
\label{sec:complexity}
%!TEX root = main.tex
Fix the data of $\ty(A,\chi,\nu)$.  Given as input an arbitrary triangulation $\tri$ of a closed oriented 3-manifold $\M$, our algorithm to compute the TVBW invariant $|M|_{\ty(A,\chi,\nu)}$ is as follows:

\medskip

\begin{enumerate}
% \item\label{itm:onevertex} 
% %Using Theorems~\ref{thm:jacorubinstein} and~\ref{thm:toppropty}, 
% Compute 1-vertex triangulations $\tri_1, \ldots, \tri_k$, of closed 3-manifolds $\M_1, \ldots, \M_k$, such that $|M|_{\ty(A,\chi,\nu)} = D^{k-1} \prod |M_i|_{\ty(A,\chi,\nu)}$ \hfill {\footnotesize [Theorem~\ref{thm:1vertextri}]}

% \hfill {\footnotesize see also [Theorems~\ref{thm:jacorubinstein},\ref{thm:toppropty}]}
% % , such that $\M$ is homeomorphic to the connected sum $\M_1 \# \ldots \# \M_k$, \hfill {\footnotesize [Theorem~\ref{thm:jacorubinstein}]}

% % \noindent
% % which implies $|\M|_{\ty(A,\chi,\nu)} = D^{k-1} |\M_1|_{\ty(A,\chi,\nu)} \times \ldots \times |\M_k|_{\ty(A,\chi,\nu)}$. \hfill {\footnotesize[Theorem~\ref{thm:toppropty}]}

%   \medskip

% \noindent
% Now, for each triangulation $\tri_*$,
%
\item\label{itm:cohomology} Compute a set of 1-cocycles $\{\alpha_1, \ldots, \alpha_{\beta_1}\} \subset \Cycle^1(\tri, \Z/2\Z)$ such that $\{[\alpha_1], \ldots, [\alpha_{\beta_1}]\}$ forms a basis of the 1-cohomology vector space $H^1(\tri, \Z/2\Z)$.

\smallskip

Compute a basis for $B^1(\tri, \Z/2\Z)$ ; the cardinality $\#B^1(\tri, \Z/2\Z)$ of the set of 1-coboundaries is equal to $2^{\dim B^1(\tri, \Z/2\Z)}$. \hfill {\footnotesize [Theorem~\ref{thm:computabilitycohom}]}

\medskip

\item\label{itm:enumerate} 
For each of the $2^{\beta_1}$ distinct 1-cohomology classes $[\alpha]$ of $H^1(\tri, \Z/2\Z)$, construct a representative $\alpha \in \Cycle^1(\tri, \Z/2\Z)$ by enumerating all $2^{\beta_1}$ $\Z/2\Z$-combinations of elements in $\{\alpha_1, \ldots, \alpha_{\beta_1}\}$, \ie, all $\alpha = \varepsilon_1 \alpha_1 + \ldots + \varepsilon_{\beta_1}\alpha_{\beta_1}$, for $\varepsilon_i \in \Z/2\Z$. 
  
\smallskip

Now, for each such $\alpha$, apply Corollary~\ref{cor:polytimesubroutine} to compute the partial state sum $|\M,\alpha|_{\ty(A,\chi,\nu)}$, with the following steps:

% \smallskip

\begin{enumerate}
  \item\label{itm:SNF} Compute a minimal family of generators for $\ker \lambda = \langle x_1, \ldots, x_g\rangle$, \hfill {\footnotesize [Lemma~\ref{lem:minfamgeneratorskerl}]}

  \smallskip

  \item\label{itm:decompose} Compute the primary decomposition of $\ker \lambda$ with generators $\ker \lambda = \langle x'_1, \ldots, x'_g\rangle$ for the summands, \hfill {\footnotesize [Theorem~\ref{thm:algoabeliangroup}]}

  \smallskip

  \item\label{itm:gram} Compute the Gram matrix of $x'_1, \ldots, x'_g$ for the pre-metric form $(\ker \lambda, q)$, by evaluating $\partial q$ on all pairs $(x'_i,x'_j)$. \hfill {\footnotesize [defined in Lemma~\ref{lem:quadraticgausssum}]}

  \smallskip

  \item\label{itm:gauss} Evaluate the Gauss sum of $(\ker \lambda, q)$, \hfill {\footnotesize [Theorem~\ref{thm:evaluategausssum}]} 

\smallskip

  and multiply by $\left|\sqrt{|G|}\right|$. \hfill {\footnotesize [proof of Theorem~\ref{thm:polytime_subroutine}]}

\smallskip

  Normalize to get the state sum $|\M,\alpha|_{\ty(A,\chi,\nu)}$, with multiplicative factor as in Equation~(\ref{eq:partialstatesum}). 

  \hfill {\footnotesize [Corollary~\ref{cor:polytimesubroutine}]}

\smallskip

\end{enumerate}

% \medskip

  \item\label{itm:sum} Sum over all $\alpha$:
  \[
  	|\M|_{\ty(A,\chi,\nu)} = 
    \sum_{\begin{array}{c}
            \alpha = \varepsilon_1 \alpha_1 + \ldots \varepsilon_{\beta_1} \alpha_{\beta_1}\\
            \text{with } \varepsilon_i \in \Z/2\Z, \text{ and s.t.,} \\
            \langle [\alpha_1], \ldots, [\alpha_{\beta_1}]\rangle = H^1(\tri,\Z/2\Z)\\ 
          \end{array}
          } \#B^1(\tri,\Z/2\Z) \cdot |\M,\alpha|_{\ty(A,\chi,\nu)}.
  \]
% \end{enumerate}
% \item\label{itm:prod} Return the product $D^{k-1} |\M_1|_{\ty(A,\chi,\nu)} \times \ldots \times |\M_k|_{\ty(A,\chi,\nu)} = |\M|_{\ty(A,\chi,\nu)}$ over all summands of the connected sum of $\M$. \hfill {\footnotesize[Theorem~\ref{thm:toppropty}]}
\end{enumerate}

Our main Theorem~\ref{thm:maincpx} follows readily from the following:

\begin{theorem}
Fix a Tambara-Yamagami category $\ty(A,\chi,\nu)$. Given a triangulation $\tri$ of a closed 3-manifold $\M$ with $n$ tetrahedra, the above algorithm computes $|\M|_{\ty(A,\chi,\nu)}$ in $O(2^{\beta_1} n^3)$ operations and using $O(n^2)$ memory words.
\end{theorem}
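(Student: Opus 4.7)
The plan is to verify correctness by assembling the theorems cited alongside each step of the algorithm, and then to carry out the running-time analysis, paying close attention to how the reduction to one-vertex triangulations interacts with the Betti-number count.

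For correctness, each step either invokes a previously established result (Theorem~\ref{thm:jacorubinstein} for the connected-sum decomposition, Theorem~\ref{thm:toppropty} for topological invariance and multiplicativity when assembling the final answer) or an equality derived earlier in Section~\ref{sec:fptalgorithm}: Lemma~\ref{lem:onecocycle} guarantees that the fibers $\phi^{-1}(\alpha)$ over 1-cocycles $\alpha \in \Cycle^1(\tri_*, \Z/2\Z)$ partition $\adm_A(\tri_*)$, and Corollary~\ref{cor:polytimesubroutine} evaluates each partial state sum. Summing over $\alpha$ reconstitutes $|\M_*|_{\ty(A,\chi,\nu)}$, and multiplicativity then assembles $|\M|_{\ty(A,\chi,\nu)}$.

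The heart of the complexity bound is the observation that for a 1-vertex triangulation $\tri_*$ one has $\Boundary^1(\tri_*, \Z/2\Z) = 0$: since $|V|=1$, the space $\Chain^0(\tri_*, \Z/2\Z) \cong \Z/2\Z$ consists entirely of constant cochains, all of which lie in $\ker \delta_0$. Hence $\Cycle^1(\tri_*) = \Hom^1(\tri_*)$ and $|\Cycle^1(\tri_*, \Z/2\Z)| = 2^{\beta_1(\M_*)}$, so step 2b enumerates exactly $2^{\beta_1(\M_*)}$ cocycles. Combined with Corollary~\ref{cor:polytimesubroutine}, the work for each summand is $O(2^{\beta_1(\M_*)} n_*^3)$ time and $O(n_*^2)$ memory, and this dominates the $O(n_*^3)$ cost of computing the cocycle basis in step 2a (Theorem~\ref{thm:computabilitycohom}).

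To combine over the summands, let $n_1, \ldots, n_k$ denote the sizes of the 1-vertex triangulations returned by Jaco--Rubinstein, so $\sum n_i \leq n$. Since Betti numbers are additive under connected sums, $\sum_i \beta_1(\M_i) = \beta_1(\M) = \beta_1$, and in particular $\beta_1(\M_i) \leq \beta_1$ for every $i$. Using the elementary bound $\sum_i n_i^3 \leq (\sum_i n_i)^3 \leq n^3$, the total number of operations is
\[
	\sum_{i=1}^k 2^{\beta_1(\M_i)} \cdot O(n_i^3) \;\leq\; 2^{\beta_1} \sum_{i=1}^k O(n_i^3) \;\leq\; O(2^{\beta_1} n^3),
\]
while memory stays at $O(n^2)$ because the workspace can be reused across the iterations over $i$ and $\alpha$. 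The main obstacle I would anticipate is not any individual step but precisely this interaction between the one-vertex reduction and the Betti-number bound: without first crushing to 1-vertex summands, the enumeration would run over $\Cycle^1$ rather than $\Hom^1$, costing an extra factor of $|\Boundary^1| = 2^{|V|-1}$ and rendering the algorithm exponential in $n$ instead of in $\beta_1$.
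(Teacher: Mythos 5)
Your proposal is correct and follows essentially the same route as the paper's proof: reduce to 1-vertex summands via Jaco--Rubinstein, enumerate the $2^{\beta_1}$ cocycles, and invoke Corollary~\ref{cor:polytimesubroutine} for each partial state sum. The only (welcome) differences are that you justify the ``one cocycle per cohomology class'' fact directly by showing $\Boundary^1(\tri_*,\Z/2\Z)=0$ when $|V|=1$, where the paper cites~\cite{MariaS20}, and that you combine the summands via the explicit bound $\sum_i 2^{\beta_1(\M_i)} n_i^3 \leq 2^{\beta_1}\left(\sum_i n_i\right)^3$, where the paper appeals to convexity to reduce to a single summand.
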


\begin{proof}
% By Theorem~\ref{thm:jacorubinstein} the pre-processing step~(\ref{itm:onevertex}) above runs in linear time and space in $n$. %For a connected sum where summands are triangulated by $\tri_1, \ldots, \tri_k$, the remainder of the algorithm runs $k$ processes computing Tambara-Yamagami invariants on each $\tri_\ell$ independantly. Because the following complexity are upper bounded by convex increasing functions, we consider the case of a single triangulation of size $n$, instead of $k$ triangulations of sizes summing to $\leq n$. 
% All following steps for the algorithms, applied to the triangulations $\tri_*$, have linear or (convex) super-linear time and space complexity. In consequence, the worst case complexity is attained by running the remainder of the algorithm on a single 1-vertex triangulation $\tri$ of size $n = |\tri|$, instead of iteratively on several triangulations $|\tri_1|, \ldots, |\tri_k|$ such that $|\tri_1|+ \ldots + |\tri_k| \leq n$. Recall that Betti numbers are additive under connected sums.

Computing a basis for $B^1(\tri, \Z/2\Z)$, as well as a set of generators $\{\alpha_1, \ldots, \alpha_{\beta_1}\}$ whose cohomology classes form a basis of $H^1(\tri,\Z/2\Z)$, in step~(\ref{itm:cohomology}) is a standard procedure in computer algebra (Theorem~\ref{thm:computabilitycohom}), done by normalizing a $O(|\tri|)\times O(|\tri|)$ matrix with $\Z/2\Z$ coefficients. This can be done in $O(|\tri|^3)$ operations in $\Z/2\Z$ with Gaussian eliminations, where an arithmetic operation in $\Z/2\Z$ has constant complexity. %\marginclem{bit operations? maintain the basis?}.

% Given a basis of the vector space $\Cycle^1(\tri_*, \Z/2\Z)$, one can iterate over all 1-cocycles $\alpha$. In a 1-vertex triangulation, there is exactly one 1-cocycle per cohomology class (see for example~\cite{MariaS20}), hence there are exactly $2^{\beta_1}$ 1-cocycles $\alpha$, with $\beta_1 = \dim \Hom^1(\tri,\Z/2\Z)$.

Given the 1-cocycles $\{\alpha_1, \ldots, \alpha_{\beta_1}\}$, we can enumerate every combination $\alpha = \varepsilon_1 \alpha_1 + \ldots + \varepsilon_{\beta_1}\alpha_{\beta_1}$ in (amortized) $O(n)$ operations per $\alpha$. Indeed, enumerating all $\{\varepsilon_1, \ldots, \varepsilon_{\beta_1}\}$ is akin to incrementing a $\beta_1$-bits binary counter from $0 \ldots 0$ to $1 \ldots 1$, which induces an (amortized) change of $O(1)$ bits per increment~\cite[Chapter 16]{Cormen:2001:IA:580470}. The 1-cocycles are represented by formal sums of $O(n)$ edges, and computing the sum of two 1-cocycles takes $O(n)$ operations.

For each $\alpha$ as above, we compute the partial state sum using Corollary~\ref{cor:polytimesubroutine} in polynomial time. Specifically, step~(\ref{itm:SNF}) is solved by computing a Smith normal form, together with transformation matrices, of a particular $O(d(A)n) \times O(d(A)n)$-integer matrix, which has complexity $O(n^3)$ (Lemma~\ref{lem:minfamgeneratorskerl}) with (modular) Gaussian eliminations, once $A$ is fixed. %By the proof of Lemma~\ref{lem:minfamgeneratorskerl}, the reduction of the integer matrix can be done modulo $|A|$, which implies that arithmetic operations in $\Z$ can be performed with a complexity bounded by a function $c_1(|A|)$ depending solely on 
%the size $|A|$ of the finite abelian group $A$, which is in particular independent of the size of the matrix. 

We can read-off the generators $x_1,\ldots,x_g$ of $\ker \lambda$ from the transformation matrices, and compute the generators $x'_1,\ldots,x'_g$ of step~(\ref{itm:decompose}) by computing the primary decomposition of the subgroup $\ker \lambda$ (Theorem~\ref{thm:algoabeliangroup}).

Computing the Gram matrix in step~(\ref{itm:gram}) is done by evaluating the bilinear pairing $\partial q$, defined in Lemma \ref{lem:quadraticgausssum}, on all pairs $(x'_i,x'_j)$. The quadratic form $q$ is defined as a sum of $O(n)$ terms of the form $b(x'_i,x'_j)$, and the overall Gram matrix can be computed in $O(g^2n)$ operations. Recall that we have an oracle access to the function $b(\cdot,\cdot)$ (see beginning of Section~\ref{sec:fptalgorithm}).

By Theorem~\ref{thm:evaluategausssum}, the Gauss sum of $(\ker \lambda, q)$ can be evaluated from the Gram matrix of $x'_1, \ldots, x'_g$ in $O(g^3)$ operations in $\Q/\Z$, where the arithmetic complexity of operations in $\Q/\Z$ depend solely on the maximal order of an element of $A$ (independently of the size of the Gram matrix). See Appendix~\ref{sec:gausssum} for details. %By the proof of Theorem~\ref{} in Appendix~\ref{}, in our framework, the complexity in bit operations of arithmetic operations in $\Q/\Z$ is bounded by a function $c_2(|A|)$ depending solely on $|A|$, and independent of the size of the Gram matrix.

Finally, the invariant $|\M|_{\ty(A,\chi,\nu)}$ computed in step~(\ref{itm:sum}) is the sum of $2^{\beta_1}$ complex numbers represented with a constant number of memory words each for a fixed group $A$ (with $\beta_1 \leq n$). %The output $|\M|_{\ty(A,\chi,\nu)}$ is computed in step~(\ref{itm:prod}) as the product of $O(n)$ terms $|\M_*|_{\ty(A,\chi,\nu)}$.

% the overall product computed in step~(\ref{itm:sum}) is made of the product of $O(2^\beta_1 n)$ elementary Gauss sums, \ie complex number that can be represented by a constant number of bits for a fixed abelian group $A$.

In consequence, the overall complexity of the algorithm, knowing $g \in O(d(A)n)$ and considering $d(A), |A|$ constant for a fixed group $A$, is $O(2^{\beta_1}n^3)$.
The memory complexity is bounded by the size of the matrices, which is $O(n^2)$, since we have a fixed $A$ (coefficients are represented with a constant number of memory words, and $d(A)$ is a constant).
% \clem{Storjohann gives the transformations + no need to good further Z/|G|Z in modular reconstruction}
\end{proof}

\section*{Acknowledgments}
CD was partially supported through the Simons Collaboration on Global Categorical Symmetries (Simons Foundation Award 888988). CM was partially supported by the ANR project ANR-20-CE48-0007 (AlgoKnot). ES was partially supported by NSF award DMS-2038020.

\bibliographystyle{alpha}
\bibliography{bibliography}

\newpage

\appendix

\section{Complexity of evaluating Gauss sums on finite abelian groups}
\label{sec:gausssum}
%!TEX root = main.tex

In this section, with prove the algorithmic complexity bounds of Theorem~\ref{thm:evaluategausssum}. We first introduce further background on pre-metric groups and their Gauss sums. Two subgroups $G_1,G_2$ of $G$ are {\em orthogonal} w.r.t. to a bilinear pairing $b$ if, for all $x \in G_1$ and $y \in G_2$, $b(x,y)=0$. Notably:

\begin{lemma}
\label{lem:orthopq}
Any two $p$-subgroup $G_{(p)}$ and $q$-subgroup $G_{(q)}$ of a group $G$, for $p,q$ distinct primes, are orthogonal for any bilinear pairing on $G$. 
\end{lemma}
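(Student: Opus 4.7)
The plan is to prove this by a standard torsion argument: any element in a $p$-subgroup has order a power of $p$, so bilinearity forces the value of the pairing to be annihilated by a power of $p$ (from one slot) and by a power of $q$ (from the other slot), and coprimality then kills it.

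More concretely, I would fix arbitrary $x \in G_{(p)}$ and $y \in G_{(q)}$, and denote their orders by $p^k$ and $q^\ell$ respectively, where $k, \ell \geq 0$. Let $b$ be any bilinear pairing on $G$, taking values in some abelian group. Using linearity in the first argument, I get
\[
p^k \, b(x,y) = b(p^k x, y) = b(0, y) = 0,
\]
and symmetrically, using linearity in the second argument,
\[
q^\ell \, b(x,y) = b(x, q^\ell y) = b(x, 0) = 0.
\]
So $b(x,y)$ is annihilated by both $p^k$ and $q^\ell$ in the target abelian group.

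Since $p \neq q$ are distinct primes, $\gcd(p^k, q^\ell) = 1$. By Bezout's identity, there exist integers $u, v \in \Z$ with $u p^k + v q^\ell = 1$. Multiplying through (in the target abelian group),
\[
b(x,y) = (u p^k + v q^\ell)\, b(x,y) = u \bigl(p^k b(x,y)\bigr) + v \bigl(q^\ell b(x,y)\bigr) = 0,
\]
as required. I do not anticipate any real obstacle here; the only thing to be a little careful about is that the argument is agnostic to the target of the pairing (it need not be $\Q/\Z$), since it only uses the abelian group structure on the codomain and the $\Z$-bilinearity of $b$. The result then extends from the pair $(x,y)$ to arbitrary elements of $G_{(p)}$ and $G_{(q)}$ by the definition of orthogonality.
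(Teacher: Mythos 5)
Your proof is correct and is exactly the standard torsion/Bezout argument; the paper states this lemma without proof, treating it as the well-known fact you have written out. The one implicit step worth noting is $b(0,y)=0$, which follows from additivity in the first slot, and your observation that the argument works for any abelian target group is accurate.
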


A discriminant pairing $(G,b)$ admits an {\em orthogonal decomposition}, denoted by $(G_1,b) \perp (G_2,b)$, \ifff:
\[
  G_1,G_2 \neq 0, \ \ G = G_1 \oplus G_2, \ \ b(x,y) = 0 \ \text{whenever} \ x\in G_1 \ \text{and} \ y \in G_2.
\]  

A discriminant pairing which admits no non-trivial orthogonal decomposition is called {\em irreducible}. We say that a metric group $(G,q)$ admits an orthogonal decomposition if $(G,\partial q)$ admits one.

% There is a similar notion of orthogonal decomposition of metric forms $(G,q)$ considering the bilinear map $\partial q$. \marginclem{double check this fact}

Gauss sums split multiplicatively under orthogonal decomposition:

\begin{lemma}
\label{lem:gaussusummult}
Gauss sums are multiplicative with respect to orthogonal sums, \ie, for $(G,q) = (G_1,q_1) \perp (G_2,q_2)$, we have:
\[
  \Theta(G,q) = \Theta(G_1,q_1)\Theta(G_2,q_2).
\]
\end{lemma}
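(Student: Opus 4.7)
The plan is to unwind the definition of the Gauss sum on $(G,q)$ and use the orthogonal direct sum decomposition to split the sum as a product. The whole argument is a short calculation; there is no real obstacle, just a single algebraic identity to verify carefully.

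First, I would exploit the internal direct sum $G = G_1 \oplus G_2$ so that every $x \in G$ is written uniquely as $x = x_1 + x_2$ with $x_i \in G_i$, and note that $|G| = |G_1| \cdot |G_2|$. The key identity is that the quadratic form splits as
\[
q(x_1 + x_2) \;=\; q_1(x_1) + q_2(x_2) \pmod{\Z}.
\]
This is where orthogonality enters: by definition of $\partial q$ we have $q(x_1+x_2) = q(x_1) + q(x_2) + \partial q(x_1,x_2)$, and by hypothesis $\partial q(x_1,x_2) = 0$ whenever $x_1 \in G_1$ and $x_2 \in G_2$. Since $q_i$ is the restriction of $q$ to $G_i$, this gives the desired splitting.

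With this in hand, I would compute
\[
\Theta(G,q) \;=\; \frac{1}{\sqrt{|G|}} \sum_{x \in G} \e(q(x)) \;=\; \frac{1}{\sqrt{|G_1|\,|G_2|}} \sum_{x_1 \in G_1} \sum_{x_2 \in G_2} \e\!\bigl(q_1(x_1) + q_2(x_2)\bigr).
\]
Because $\e$ is a group homomorphism from $(\Q/\Z,+)$ to $(U(1),\times)$, the summand factors as $\e(q_1(x_1)) \cdot \e(q_2(x_2))$, and the double sum separates into a product of sums over $G_1$ and $G_2$. Distributing the $\tfrac{1}{\sqrt{|G_i|}}$ factors then yields
\[
\Theta(G,q) \;=\; \Biggl(\frac{1}{\sqrt{|G_1|}} \sum_{x_1 \in G_1} \e(q_1(x_1))\Biggr)\Biggl(\frac{1}{\sqrt{|G_2|}} \sum_{x_2 \in G_2} \e(q_2(x_2))\Biggr) \;=\; \Theta(G_1,q_1)\,\Theta(G_2,q_2),
\]
which is the stated equality. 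The only step that requires care is the vanishing of the cross term $\partial q(x_1,x_2)$; everything else is bookkeeping on the product structure.
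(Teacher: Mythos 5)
Your proof is correct and complete: the decomposition $q(x_1+x_2)=q_1(x_1)+q_2(x_2)+\partial q(x_1,x_2)$ together with the vanishing of the cross term on $G_1\times G_2$ is exactly the right mechanism, and the factorization of the normalization via $|G|=|G_1|\,|G_2|$ finishes it. The paper states this lemma without proof, treating it as a standard fact, so there is nothing to compare against; your calculation is the standard argument one would expect.
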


Mainly due to the work of Wall~\cite{Wall63}, irreducible metric groups are classified, as well as the value of their Gauss sums~\cite{BasakJ15}: 

\begin{theorem}[\cite{Wall63}, after \cite{BasakJ15}]
\label{thm:metricgroupdec}
Let $(G,q)$ be a metric group. It admits a unique orthogonal decomposition, up to permutation of the terms, 
\[
  (G,q) = (G_1,q_1) \perp \ldots \perp (G_k,q_k)
\]
into irreducible metric groups $(G_i,q_i)$ of the following types:
\[
\arraycolsep=10pt\def\arraystretch{2}
  \begin{array}{|c|c|c|c|}
  \hline
  \text{G} & q(x) \in \Q/\Z & \Theta(G,q) & \left(\partial q\right) \\
  \hline
  \displaystyle\Z/p^r\Z ,p>2,r>0 & \displaystyle\alpha \frac{(p^r+1)}{2p^r} \cdot x^2, \alpha \in \{1, u_p\} & \displaystyle\left(\frac{2\alpha}{p}\right)^r \epsilon_{p^r} & \displaystyle\frac{\alpha}{p^r}\\
  \hline
  \displaystyle\Z/2^r\Z & \displaystyle\frac{\alpha}{2^{r+1}} \cdot x^2, \alpha \in \{-5,-1,1,5\} & \displaystyle(-1)^{r(\alpha^2-1)/8}\e(\alpha/8) & \displaystyle\frac{\alpha}{2^r}\\
  \hline
  \displaystyle(\Z/2^r\Z)^2 & \displaystyle\frac{x_1x_2}{2^{r}} & 1 & 
  \displaystyle\left(\arraycolsep=1pt\def\arraystretch{1}\begin{array}{cc}
                                            0 & 2^{-r}\\ 2^{-r} & 0 \\
                                            \end{array}\right)  \\
  \hline 
  \displaystyle(\Z/2^r\Z)^2 & \displaystyle\frac{x_1^2+x_1x_2+x_2^2}{2^{r}} & (-1)^r & \displaystyle\left(\arraycolsep=1pt\def\arraystretch{1}\begin{array}{cc}
                                            2^{1-r} & 2^{-r}\\ 2^{-r} & 2^{1-r} \\
                                            \end{array}\right)  \\
  \hline
  \end{array}
\]
Here, $u_p$ is a quadratic non-residue mod $p$ and $\epsilon_s$ is either $1$ or $i$, following: 
\[
  \epsilon_s := \left\{ 
    \begin{array}{cl}
      1 & \text{if $s \equiv 1 \mod 4$},\\
      i & \text{if $s \equiv 3 \mod 4$.}\\
    \end{array}\right.,
\]
for an odd integer $s$.
% \begin{itemize}
% \item for an odd prime $p$ and $r>0$, $(\Z/p^r\Z, \alpha (p^r+1)/2p^r)\times x^2$ for $\alpha$ either $1$ or a {\em quadratic non-residue mod $p$},
% \item $(\Z/2^r\Z, \alpha/2^{r+1} x^2)$ for $\alpha \in \{-5,-1,1,5\}$,
% \item $( (\Z/2^r\Z)^2, x_1x_2/2^{r})$,
% \item $( (\Z/2^r\Z)^2, (x_1^2+x_1x_2+x_2^2)/2^{r})$.
% \end{itemize}
\end{theorem}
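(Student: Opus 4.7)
The plan is to reduce to the $p$-primary components and then classify irreducible pieces case by case. By Lemma~\ref{lem:orthopq}, distinct primary subgroups $G_{(p)}$ and $G_{(q)}$ are automatically orthogonal under any bilinear pairing on $G$, so $\partial q$ restricts to a non-degenerate symmetric pairing on each $G_{(p)}$. This yields an orthogonal decomposition of $(G,q)$ as a metric group into its primary parts $(G_{(p)},q|_{G_{(p)}})$, reducing the theorem to classifying irreducible metric $p$-groups for each prime $p$ separately.

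For an odd prime $p$, I would induct on $|G_{(p)}|$. Fix an element $x \in G_{(p)}$ realizing the maximal order $p^r$. Because $2$ is invertible modulo $p^r$, the quadratic form on $\langle x\rangle$ is recovered from the bilinear pairing via $q(x) = \tfrac{1}{2}\partial q(x,x)$, and non-degeneracy of $\partial q$ provides an orthogonal complement $\langle x\rangle^\perp$ by a Gram-Schmidt step. Iterating splits $G_{(p)}$ orthogonally into cyclic pieces $\Z/p^{r_i}\Z$ with forms $q(y) = c_i y^2/p^{r_i}$. Since $(\Z/p^{r_i}\Z)^\times / ((\Z/p^{r_i}\Z)^\times)^2$ has exactly two classes for odd $p$, rescaling each generator by an appropriate unit reduces $c_i$ to the normal form $\alpha(p^{r_i}+1)/2$ with $\alpha \in \{1, u_p\}$ given in the first row of the table.

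The case $p = 2$ is the main obstacle. Gram-Schmidt fails because $2$ is not invertible, and the quadratic form is strictly finer than its associated bilinear pairing. Here I would first verify that the two rank-two forms on $(\Z/2^r\Z)^2$ appearing in the table---the hyperbolic form with Gram matrix $\bigl(\begin{smallmatrix} 0 & 2^{-r}\\ 2^{-r} & 0\end{smallmatrix}\bigr)$ and the anisotropic form with Gram matrix $\bigl(\begin{smallmatrix} 2^{1-r} & 2^{-r}\\ 2^{-r} & 2^{1-r}\end{smallmatrix}\bigr)$---really are irreducible, by checking that neither admits a change of basis in which the Gram matrix becomes diagonal. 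Then, given an arbitrary $2$-primary metric group, one either locates a generator $x$ of maximal order whose $q$-value lets $\langle x\rangle$ split off as a cyclic summand $\Z/2^r\Z$ with $\alpha \in \{\pm 1, \pm 5\}$, or, when every such generator has a ``degenerate'' $q$-value, one is forced to pair it with a partner and split off one of the rank-two blocks instead. Induction on $|G_{(2)}|$ terminates the classification; correctly executing this case split is exactly where Wall's original argument does the bulk of the work.

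Once the decomposition is in place, Lemma~\ref{lem:gaussusummult} reduces $\Theta(G,q)$ to the product of Gauss sums of the irreducible pieces, and it remains to evaluate each. For odd $p$, the sum $\tfrac{1}{\sqrt{p^r}}\sum_{x \in \Z/p^r\Z} \e(\alpha x^2/p^r)$ is a classical quadratic Gauss sum evaluated via the Legendre symbol, giving $\left(\tfrac{2\alpha}{p}\right)^r \epsilon_{p^r}$. For $p=2$, the rank-one Gauss sums are computed by a direct evaluation yielding $(-1)^{r(\alpha^2-1)/8}\e(\alpha/8)$; the hyperbolic rank-two sum factors so that the inner sum over $x_1$ forces $x_2 = 0$ and the total collapses to $1$, while the anisotropic sum evaluates to $(-1)^r$ by a similar inner summation. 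Finally, uniqueness of the decomposition up to permutation follows because the isometry invariants of each irreducible piece---the cyclic order $p^{r_i}$, the class of $\alpha$ modulo squares of units, and the Arf-type invariant distinguishing the two rank-two forms---are each intrinsic to the isometry class of $(G,q)$.
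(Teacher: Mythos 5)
The paper does not actually prove this theorem: it is imported verbatim from Wall \cite{Wall63} and the algorithmic reformulation of Basak--Johnson \cite{BasakJ15}, and the only proof-like material in the paper (Appendix~\ref{sec:gausssum}, Algorithm~\ref{alg:diaggram}) concerns the \emph{complexity} of computing the decomposition, not its existence. Your sketch follows the same standard route as those references---splitting into $p$-primary parts via Lemma~\ref{lem:orthopq}, diagonalizing for odd $p$ using that $2$ is a unit, treating $p=2$ by a cyclic-versus-rank-two dichotomy, and evaluating the resulting classical Gauss sums (your evaluations of the four table entries are correct, and Lemma~\ref{lem:gaussusummult} assembles them). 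But two genuine gaps remain. First, the entire $2$-primary classification---which you yourself identify as ``where Wall's original argument does the bulk of the work''---is described rather than proved: one must carry out the valuation-theoretic case analysis showing that when no generator of maximal order $2^r$ has a diagonal pairing value of maximal order, a rank-two block with one of the two listed Gram matrices necessarily splits off, and that the four unit classes $\alpha\in\{\pm1,\pm5\}$ exhaust the cyclic possibilities. Asserting the dichotomy is not a proof of it. (For odd $p$ there is also a small elided step: non-degeneracy gives an \emph{off-diagonal} entry of maximal order, and one must move it to the diagonal via $x\mapsto x+y$ before Gram--Schmidt applies; this is minor and is exactly what Algorithm~\ref{alg:diaggram} does.)

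Second, your uniqueness argument does not work. Knowing that the parameters of each irreducible piece are isometry invariants \emph{of that piece} says nothing about whether the multiset of pieces is determined by the isometry class of $(G,q)$, and in fact it is not: over $\Z/p\Z$ with $p$ odd, the forms $\langle 1\rangle\perp\langle 1\rangle$ and $\langle u_p\rangle\perp\langle u_p\rangle$ (i.e., $\frac{p+1}{2p}(x^2+y^2)$ versus $\frac{p+1}{2p}u_p(x^2+y^2)$) have the same rank and square-class of discriminant over $\mathbb{F}_p$ and are therefore isometric, so indecomposable decompositions are \emph{not} unique up to permutation in the naive sense. The relations among different decompositions are the subject of further work after Wall (e.g., Kawauchi--Kojima, Nikulin). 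Fortunately, the algorithm in the paper only needs \emph{existence} of some orthogonal decomposition together with multiplicativity of $\Theta$ and the fact that $\Theta(G,q)$ is an invariant of the isometry class, so the uniqueness clause is not load-bearing---but your proposed proof of it should be removed or replaced rather than left as is.
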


It is possible that the pre-metric groups $(G,q)$ we encounter in the main body of the paper could be properly \emph{pre}-metric---meaning $q$ is degenerate---and the astute reader will notice that Theorem \ref{thm:metricgroupdec} only applies to non-degenerate $q$.  We quickly explain how we may reduce to the non-degenerate case; for details of these constructions, we refer the reader to \cite{Taylor:Gauss}.

If $q$ is degenerate, let $\rad{G}{q}$ denote the \emph{radical}, that is,
\[ \rad{G}{q} = \{ g \in G \mid (\partial q)(g,x) = 0, \ \ \forall x \in G\}. \]
It is a direct computation to check that $q|_{\rad{G}{q}}: \rad{G}{q} \to \Q/\Z$ is always a group homomorphism.  We say the degenerate form $q$ is \emph{tame} if $q|_{\rad{G}{q}}$ is the trivial homomorphism; otherwise, $q$ is \emph{wild}.  Given access to a form, tameness and wildness can be readily decided (recall that we may assume we know the primary decomposition of $G$): simply compute a set of generators of $\rad{G}{q}$ using linear algebra on the Gram matrix of $\partial q$, and then test if $q$ vanishes on all the generators; if so, then $q$ is tame; if not, then $q$ is wild.

For tame degenerate forms, there is a well-defined quotient quadratic form $\overline{q}: G/\rad{G}{q} \to \Q/\Z$, which is \emph{always} non-degenerate, and hence $(G/\rad{G}{q}, \overline{q})$ is a metric group.  Now it is not necessarily true that $(G,q)$ splits as an orthogonal sum $(G/\rad{G}{q}, \overline{q}) \oplus (\rad{G}{q},0)$, but it is true that
\[ \Theta(G,q) = \sqrt{|\rad{G}{q}|}\Theta(G/\rad{G}{q}, \hat{q}).\]
Thus, we may compute the Gauss sum $\Theta(G,q)$ of a degenerate form if it is tame.

For wild degenerate forms, there is no well-defined notion of quotient form.  Fortunately, for our purposes, all we are after are Gauss sums, and it is simple to argue that the Gauss sums of wild degenerate forms always vanish:
\[ \Theta(G,q) =0.\]

Therefore, we may focus our analysis on metric groups.~\cite{BasakJ15} give a constructive proof of Theorem~\ref{thm:metricgroupdec} with an algorithm to normalize the Gram matrix of the bilinear pairing $\partial q$ associated to a metric group $(G,q)$ in order to compute the orthogonal decomposition of $(G,q)$ into irreducible metric groups $(G_i,q_i)_{i=1, \ldots, k}$. They then give the correspondence between irreducible metric groups and their Gauss sum, allowing us to recover the Gauss sum of $(G,q)$ by taking the product of the Gauss sums of the irreducible metric groups $(G_i,q_i)_{i=1, \ldots, k}$ of the decomposition (Lemma~\ref{lem:gaussusummult}).

\medskip

The point of the section is to justify the strictly cubic complexity analysis of the algorithm of~\cite{BasakJ15} as highlighted in Theorem~\ref{thm:evaluategausssum}, by controlling the potential blow-up of matrix coefficients (in $\Q/\Z$) when normalizing the Gram matrix. We recall~\cite{BasakJ15}'s result:

\begin{theorem}[\cite{BasakJ15}]
\label{thm:basakjalgo}
Let $(G,q)$ be a metric group, where $G$ is a $p$-group for a prime $p$. Given the $(d \times d)$-Gram matrix of $\partial q$ for a family of generators $x_1, \ldots , x_d$ of $G$, there is an algorithm to compute the orthogonal decomposition of $(G,q) = (G_1,q_1) \perp \ldots \perp (G_k,q_k)$ into irreducible metric groups, and to deduce $\Theta(G,q)$.
\end{theorem}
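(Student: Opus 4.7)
The plan follows the constructive approach of~\cite{BasakJ15}: first reduce $(G,q)$ to a product of irreducible blocks by orthogonally decomposing the Gram matrix of $\partial q$, then apply the classification in Theorem~\ref{thm:metricgroupdec} block-by-block and take a product via Lemma~\ref{lem:gaussusummult}.

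For the decomposition phase, I would reorder the generators $x_1,\ldots,x_d$ so that their orders satisfy $p^{r_1}\ge p^{r_2}\ge\cdots\ge p^{r_d}$, and operate on the Gram matrix $B=\bigl(\partial q(x_i,x_j)\bigr)_{i,j}$, whose entries all lie in $p^{-r_1}\Z/\Z \subset \Q_{(p)}/\Z$. The core subroutine is a $p$-adic analogue of symmetric Gaussian elimination that extracts one irreducible summand from the top-left of $B$ and recurses. For odd $p$, non-degeneracy of $\partial q$ forces some maximal-order generator $x_i$ to satisfy $\partial q(x_i,x_i) = \alpha / p^{r_1}$ for a unit $\alpha$; after swapping $x_i$ into position $1$, the pairings $\partial q(x_1,x_j)$ for $j>1$ can all be cleared by substitutions $x_j\mapsto x_j-\beta_j x_1$ with $\beta_j\in\Z$ chosen appropriately, which is possible because $\alpha$ is invertible modulo $p^{r_1}$. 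This peels off a cyclic summand $\langle x_1\rangle\cong\Z/p^{r_1}\Z$ orthogonal to a $(d{-}1)\times(d{-}1)$ residual Gram matrix, on which the procedure recurses.

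The delicate case, which is the main obstacle, is $p=2$: a maximal-order generator may have self-pairing $\partial q(x_i,x_i)=0$ even though $\partial q$ is non-degenerate overall. In that situation I would locate a pair of maximal-order generators $x_i,x_j$ for which the cross-pairing $\partial q(x_i,x_j)$ has the correct $2$-adic valuation, and extract from them a rank-two hyperbolic or elliptic block as in the last two rows of the table of Theorem~\ref{thm:metricgroupdec}; otherwise the rank-one strategy applies, together with a unit substitution over $\Z/2^r\Z$ that normalizes any cyclic $\Z/2^r\Z$ block so its diagonal entry matches one of the canonical representatives $\alpha\in\{-5,-1,1,5\}$. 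Once the full decomposition $(G,q) = (G_1,q_1)\perp\cdots\perp(G_k,q_k)$ has been produced, each summand is identified with one of the four irreducible types by inspection of its (at most $2\times 2$) Gram block, its Gauss sum is read directly from the closed-form column of the table, and Lemma~\ref{lem:gaussusummult} assembles the answer as $\Theta(G,q) = \prod_\ell \Theta(G_\ell,q_\ell)$. Throughout the procedure, every intermediate matrix entry remains in $p^{-r_1}\Z/\Z$, which is what will underpin the cubic complexity analysis of Theorem~\ref{thm:evaluategausssum} that follows.
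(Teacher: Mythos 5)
Your overall plan coincides with the one the paper relies on: Theorem~\ref{thm:basakjalgo} is quoted from~\cite{BasakJ15} rather than reproved, and the paper's own contribution is the complexity analysis in Lemma~\ref{lem:cpxnormalizegrammatrix}, whose proof describes exactly the symmetric Gaussian elimination you outline (size-$1$ diagonal blocks for odd $p$, blocks of size $1$ and $2$ for $p=2$, then the table of Theorem~\ref{thm:metricgroupdec} combined with multiplicativity of Gauss sums, Lemma~\ref{lem:gaussusummult}). So in spirit your sketch is the right reconstruction of the cited algorithm.

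One step, however, fails as literally stated. You claim that for odd $p$, non-degeneracy of $\partial q$ forces one of the \emph{given} maximal-order generators $x_i$ to satisfy $\partial q(x_i,x_i)=\alpha/p^{r_1}$ with $\alpha$ a unit, so that a mere swap places a pivot in position $(1,1)$. This is false: take $G=(\Z/p\Z)^2$ with $q(x_1e_1+x_2e_2)=x_1x_2/p$. Here $\partial q$ is non-degenerate and both $e_1,e_2$ have maximal order, yet $\partial q(e_i,e_i)=2q(e_i)=0$ for $i=1,2$; every minimal-valuation entry of the Gram matrix is off-diagonal and your procedure finds no pivot. The repair is the same device you already invoke for $p=2$, used in the opposite direction: if $b_{ij}$ has minimal $p$-valuation while $b_{ii}$ and $b_{jj}$ do not, replace $x_i$ by $x_i+x_j$; the new diagonal entry $b_{ii}+b_{jj}+2b_{ij}$ then has minimal valuation precisely because $2$ is a unit modulo the odd prime $p$. (The failure of this step when $p=2$ is exactly why the irreducible rank-two blocks of Theorem~\ref{thm:metricgroupdec} are unavoidable there.) With that correction your sketch matches the algorithm of~\cite{BasakJ15} that the theorem asserts, and the entry-size control you note at the end is indeed the point of the paper's Lemma~\ref{lem:cpxnormalizegrammatrix}.
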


We prove the complexity part of:

\begin{lemma}
\label{lem:cpxnormalizegrammatrix}
For a {\em fixed} metric group $(G,q)$, where $G$ is a $p$-group, the algorithm of Theorem~\ref{thm:basakjalgo} terminates in $O(d^3 \poly(\log |G|))$ operations, using $O(d^2 \poly(\log |G|))$ memory words.
\end{lemma}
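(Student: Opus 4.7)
The plan is to analyze the Basak-Johnson algorithm underlying Theorem~\ref{thm:basakjalgo} by bounding two things separately: the number of high-level arithmetic operations on Gram-matrix entries, and the bit-length of those entries throughout the computation. For the first bound, I would observe that the decomposition proceeds in $O(d)$ outer steps, each of which isolates one irreducible summand from the list in Theorem~\ref{thm:metricgroupdec} (of rank one or two) and strips it off by replacing the current Gram matrix with that of its orthogonal complement. The inner loop of each outer step is a Gaussian-style normalization: scan the $O(d^2)$ entries for a pivot of appropriate $p$-adic valuation, then apply $O(d)$ elementary integral row-column transvections, each updating $O(d)$ entries. The cost is therefore $O(d^2)$ arithmetic operations per outer step and $O(d^3)$ operations in total, matching the cost of a standard Smith normal form computation.

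For the second bound, the key observation is that if $p^N$ denotes the exponent of the $p$-group $G$, then every value of $\partial q$ lies in the finite subgroup $(1/p^N)\Z/\Z \subseteq \Q/\Z$: bilinearity of $\partial q$ together with $p^N y = 0$ in $G$ forces $p^N\,\partial q(y,y') = \partial q(p^N y, y') = 0$. Moreover, this integrality is preserved under any $\Z$-linear change of basis of the generators, since
\[ \partial q\!\left(\textstyle\sum_j n_{ij}y_j,\, \sum_\ell n_{i'\ell}y_\ell\right) = \sum_{j,\ell} n_{ij}\,n_{i'\ell}\,\partial q(y_j,y_\ell) \in \tfrac{1}{p^N}\Z/\Z. \]
Hence every entry of every intermediate Gram matrix can be stored as an integer in $\{0,\ldots,p^N-1\}$, of bit-length $O(N \log p) = O(\log|G|)$, occupying $\poly(\log|G|)$ memory words. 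The dominant arithmetic subroutines in the Basak-Johnson algorithm are addition, subtraction, and (occasionally) inversion modulo $p^k$ for some $k \le N$; each of these costs $\poly(\log|G|)$ elementary operations, the inversion being handled by the extended Euclidean algorithm.

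Combining the two bounds yields a running time of $O(d^3)$ high-level arithmetic operations, each costing $\poly(\log|G|)$, for a total of $O(d^3\,\poly(\log|G|))$ elementary operations. The memory is dominated by the current Gram matrix together with a recorded change-of-basis matrix of the same shape, each entry occupying $\poly(\log|G|)$ memory words, giving the space bound $O(d^2\,\poly(\log|G|))$. The main obstacle in turning this plan into a full proof is to go through the case analysis of~\cite{BasakJ15}, separated by the prime $p$ and by the type of irreducible summand being peeled off, and verify in every case that each inner-loop manipulation is really an integral change of basis on the current subquotient of $G$; it is this integrality property that prevents denominators from growing beyond the exponent of the ambient $p$-group. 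Once this bookkeeping is checked, no case requires any arithmetic operation beyond those analyzed above, and the stated complexity bounds follow.
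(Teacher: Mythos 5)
Your overall strategy is the same as the paper's: view the Basak--Johnson decomposition as a cubic Gaussian-elimination-style reduction of the Gram matrix ($O(d)$ blocks peeled off, $O(d^2)$ entry updates per block), and separately show that every intermediate entry stays in the finite subgroup $\frac{1}{p^r}\Z/\Z$ of $\Q/\Z$, where $p^r$ is the exponent of $G$, so that each entry fits in $O(\log|G|)$ bits and each arithmetic step costs $\poly(\log|G|)$. Your integrality argument ($p^r\,\partial q(y,y') = \partial q(p^r y, y') = 0$, preserved under integral transvections) is exactly the one the paper uses.

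There is, however, one concrete omission in your inventory of arithmetic subroutines. You claim the dominant operations are ``addition, subtraction, and (occasionally) inversion modulo $p^k$,'' and that ``no case requires any arithmetic operation beyond those analyzed above.'' That is not quite right: to normalize each pivot block into one of the standard irreducible forms of Theorem~\ref{thm:metricgroupdec} (in particular to reduce the leading coefficient to $\alpha \in \{1, u_p\}$ on $\Z/p^r\Z$ for odd $p$, or to $\alpha \in \{-5,-1,1,5\}$ for $p=2$), the algorithm must decide quadratic residuosity of the pivot and then rescale the corresponding generator by a square root of its inverse, i.e.\ it must solve $x^2 \equiv a \pmod{p^k}$. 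This is not expressible as additions, subtractions and modular inversions; the paper handles it explicitly via Tonelli--Shanks together with Hensel lifting, and notes that this step is a Las Vegas subroutine with \emph{expected} $\poly(\log|G|)$ running time (the residuosity test itself is deterministic via Euler's criterion, and one also needs a quadratic non-residue $u_p$ in hand). The complexity bound survives, but your claim that the case analysis reveals nothing beyond the three listed operations would fail on precisely this normalization step, so it needs to be added before the argument closes.
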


\begin{proof}
We refer to~\cite{BasakJ15} for the details of the algorithm, and only highlight here the key operations. The computation of the decomposition is done by normalizing the Gram matrix into diagonal form (for odd prime $p$), and block diagonal form with blocks of size $1$ and $2$ (for $p=2$). Each block of the diagonal represents an irreducible form as enumerated in the table of the classification Theorem~\ref{thm:metricgroupdec}. 

The algorithm is a cubic reduction into normal form via Gaussian elimination, using arithmetic operations in $\Q/\Z$. In consequence, we only need to prove that the complexity of all arithmetic operations, and the memory size of all matrix coefficients (in $\Q/\Z$), remain bounded by a polynomial function in $\log |G|$ (and in particular is independent of the dimension of the matrix) during the reduction. 

The group $G$ being a $p$-group for a prime $p$, denote by $p^r$ the maximal order of any element in the set of generators. The algorithm of~\cite{BasakJ15} proceeds by exchanges of rows and columns, and cancellations of matrix entries by elementary row/column operations of the form $\row_i \leftarrow \row_i - u \row_j$ and $\col_i \leftarrow \col_i - u \col_j$ for some integer $u, |u| < p^r$. 

Before normalization, all entries of the Gram matrix are fractions of the form $\frac{x}{p^r} \in \Q/\Z$, for $0 \leq x < p^r$. Indeed, by bilinearity of $\partial q$, we have $\partial q( p^r x_i,x_j) = \partial q(0,x_j) = 0 = p^r \partial q(x_i,x_j)$, hence $p^r \partial q(x_i,x_j) \in \Z$ for any $x_i,x_j$. Considering the matrix updates above, all entries remain of the form $\frac{x}{p^r}$ during the reduction. In particular, the maximal order of any entry of the matrix is always a power of $p$ smaller or equal to $p^r$.

The matrix reduction also requires the following arithmetic operations: for $0 \leq a < p^r$ and $0 < k \leq r$, basic arithmetic operations ($+,-,\times$) in $\Z/p^k\Z$, inversion in $\Z/p^k\Z$, solving in $x$ the quadratic residue equation $x^2 \equiv a (\mod p^k)$, greatest power of $p$ dividing $a$. All these operations can be solved in deterministic $O(\poly(r \log p) )$ operations for a small polynomial $\poly(\cdot)$---except solving quadratic residues---using standard arithmetic operations (notably the (extended) Euclidean algorithm for gcd and inverses modulo an integer). Solving $x^2 \equiv a (\mod p^k)$ can be done with a Las Vegas algorithm in (expected) $O(\poly(r \log p) )$ operations for a small polynomial $\poly(\cdot)$ (using a blend of techniques such as Tonelli-Shanks algorithm~\cite{Shanks73} and Hensel's lifting lemma~\cite[Thm~7.3]{Eisenbud95} in modular arithmetic). %There is no known deterministic algorithm for this last problem, other than checking exhaustively searching for $x$ in $[0;p^k-1]$ in $O(p^k)$ operations in $\Z/p^k\Z$.

The complexity $O(d^3 \poly(\log |G|))$ follows from the fact than $p^k \leq |G|$. The bound $O(d^2 \poly( \log |G|))$ for memory complexity is the size required to store the $(d\times d)$ Gram matrix, where matrix entries in $\Q/\Z$ are represented by $\poly(\log |G|)$ bits (by, \eg, pairs of integers (nominator, denominator) ).
\end{proof}

Finally, we can prove Theorem~\ref{thm:evaluategausssum}:

\begin{proof}[Proof of Theorem~\ref{thm:evaluategausssum}]
Let $(G,q)$ be a pre-metric group, given by its primary decomposition with $d=d(G)$ summands, and an oracle access $x \mapsto q(x)$ for any $x \in G$. We treat each $p$-subgroups of $G$ separately (by virtue of Lemma~\ref{lem:orthopq} and Lemma~\ref{lem:gaussusummult}) and return the product of the Gauss sums of the $p$-subgroups, for all primes $p$. Each Gauss sum of the $p$-subgroups is computed with Theorem~\ref{thm:basakjalgo} and the complexity bounds of Theorem~\ref{lem:cpxnormalizegrammatrix}. The complexity of Theorem~\ref{lem:cpxnormalizegrammatrix} being convex super-polynomial functions, the worst case complexity of the algorithm is attained for a single $p$-group $G$ with $|G|$ elements and $d(G)$ summands in its primary decomposition.
\end{proof}

\section{Primer on TVBW quantum invariants from spherical fusion categories}
\label{sec:quantumtopology}
Fusion categories are algebraic structures that abstract the properties needed to organize the representation theory of finite groups and more general objects (such as finite-dimensional semi-simple Hopf algebras~\cite{EGNO}). Fusion categories that are {\it spherical} give rise to 3-manifold invariants via {\it topological quantum field theory} (TQFT), a functorial assignment of vector spaces and linear maps to topological manifolds and cobordisms, respectively.  Such 3D TQFTs are called \emph{Turaev-Viro} or \emph{Turaev-Viro-Barrett-Westbury (TVBW)} TQFTs.  The invariants of closed triangulated 3-manifolds they provide can be computed from their associated fusion category via a {\it state sum construction}~\cite{TuraevViro,BarrettWestbury}.   We review the details of this construction here.

\subsection{The data type of a spherical fusion category}
\label{sec:fusion}
%!TEX root = main.tex
This section contains the equations that define a {\it skeletal spherical fusion category} and explains why their solutions give a finite data type for a spherical fusion category.  We follow the combinatorial approach to defining spherical fusion categories along the lines of \cite{Bonderson:thesis} and, especially \cite{WangCBMS}, since this point-of-view is required for our algorithmic analysis; for a fully general categorical definition see \cite{EGNO}. % Colleen: this intro can be improved now that things have been re-organized a bit.

Every spherical fusion category $\calC$ admits a combinatorial description in which the objects and their defining coherence axioms and properties can be encoded as the sets of solutions to a finite collection of polynomial equations. Given a finite label set $L=\{1,a,b,c,\ldots\}$ that indexes a set of representatives of isomorphism classes of {\it simple objects} of $\calC$, the data of a spherical fusion category $\mathcal{C}$ consists of
\begin{enumerate}
\item the non-negative integer {\it fusion coefficients} $N^{ab}_c$,
\item the {\it $F$-symbols} $[F^{abc}_d]_{e,f} \in \C$, and 
\item the {\it pivotal coefficients} $t_a \in \{\pm 1\}$ 
\end{enumerate}
satisfying the equations defining a {\it fusion ring}, the {\it triangle} and {\it pentagon equations}, and the {\it spherical pivotal equations}, respectively.\footnote{Since we are eventually interested in Tambara-Yamagami categories we will only need to consider {\it multiplicity-free} spherical fusion categories,~\ie,~those with $N^{ab}_c \in \{0,1\}$. The $F$-symbols $[F^{abc}_d]_{(e,\alpha,\beta),(f,\delta,\gamma)}$ for theories with multiplicity require additional parameters $\alpha,\beta,\delta,\gamma$.} One should think of such a collection $\{N^{ab}_c, [F^{abc}_d]_{e,f}, t_a : a,b,c,d,e,f \in L\}$---called a {\it skeletal} spherical fusion category---as a finite data type instantiating a spherical fusion category $\mathcal{C}$.  (We note that the $F$-symbols can always be taken to be algebraic numbers over the rationals, and thus can be described with a finite amount of data.)

Let $L=\{1,a,b,c, \ldots\}$ be a finite set equipped with an involution $*: L \to L$ called {\it duality} that fixes the distinguished element $1 \in L$. 

\paragraph{The data type of a fusion ring}
The free $\Z$-module $F$ with {\it basis} $L$ whose elements are finite integer-linear combinations of elements of $L$ has the structure of a {\it fusion ring} if $F$ admits a unital, associative multiplication such that (1) every product of basis elements is a non-negative integer combination of basis elements, \ie,
\[ a \times b = \sum_{c \in L} N^{ab}_c\, c \qquad N^{ab}_c \in \Z_{\ge 0}\]
and (2) the involution on basis elements extends to an anti-involution on $F$, so that $(a\times b)^* = b^* \times a^*$. The number $|L|$ is called the {\it rank} of the fusion ring.  We note that the multiplication in the ring need not be commutative; equivalently, we do not require $N^{ab}_c = N^{ba}_c$.

Data satisfying the axioms of a fusion ring $F=(L,+,\times)$ can be encoded by the {\it fusion coefficients} $N^{ab}_c \in \Z_{\ge 0}$ satisfying the equations
\begin{enumerate} 
\item $\sum_x N^{ab}_xN^{xc}_d = \sum_x N^{bc}_xN^{ax}_d$ \hfill (associativity)
\item $N^{1a}_b = N^{a1}_b = \delta_{ab}$ \hfill  (unitality) 
\item $N^{a^*b}_1 = N^{ba^*}_1 = \delta_{ab}$ \hfill (duality)
\item $N^{ab}_c = N^{a^*c}_b = N^{cb^*}_a$ \hfill (Frobenius reciprocity)
\end{enumerate}
where $\delta_{ab}$ is the Kronecker delta function. We will only need to consider fusion rings which are {\it multiplicity-free}, \ie, those with $N^{ab}_c \in \{0,1\}$ for all $a,b,c \in L$. Multiplicity-free fusion rings are classified through at least rank $6$~\cite{Palcoux2022}. % We should also consult and cite Gert here

\paragraph{The data type of a fusion category}
Fusion {\it categories} can be defined over any algebraically closed field $\mathbb{k}$, although in most applications to topology, physics, and computer science one is interested in fusion categories defined over $\mathbb{C}$. The data of a (multiplicity-free) {\it skeletal fusion category} over $\mathbb{C}$ is the collection of numbers $\{ N^{ab}_c, [F^{abc}_d]_{e,f} : a,b,c,d,e,f \in L \}$  consisting of the label set $L$, fusion ring coefficients $N^{ab}_c \in \{0,1\}$ and {\it $F$-symbols} $[F^{abc}_d]_{e,f} \in \mathbb{k}$.

The $F$-symbols are equal to $0$ whenever $N^{ab}_e\,N^{ec}_d=0$ or $N^{bc}_fN^{af}_d=0$, otherwise the { \it $F$-matrices} $[F^{abc}_d]$ must be invertible. The $F$-symbols must further satisfy the {\it triangle equations}: 
\[ [F^{1bc}_d]_{b,d} = 1 \qquad [F^{a1c}_d]_{a,c} = 1 \qquad [F^{ab1}_d]_{d,b}=1\]
and the {\it pentagon equations}:
\[ \sum_h [F^{abc}_g]_{f,h}[F^{ahd}_e]_{g,k}[F^{bcd}_k]_{h,l} = [F^{fcd}_e]_{g,l}[F^{abl}_e]_{f,k} \]
for all $a,b,c,d \ldots \in L$. 
The $F$-symbols also satisfy a property called {\it rigidity} so that $[F^{aa^*a}_a]_{1,1}= [F^{a^*aa^*}_{a^*}]^{-1}_{1,1}$.

Notice that these are integral polynomial equations, and hence the $F$-symbols may be encoded as algebraic numbers over $\mathbb{Q}$. The details of these encodings are inessential for our purposes, since we will be working with a fixed fusion category. We make this remark only to point out that the data type of a skeletal fusion category over $\mathbb{C}$ is in fact finite. 

Finding solutions to the pentagon equations (and thus proving the existence of a fusion category) with the aid of a computer is an important frontier in the program to classify fusion categories, see for example~\cite{Wolf}.

Two sets of $F$-symbols $[F^{abc}_d]_{e,f}$ and $[G^{abc}_d]_{e,f}$ represent the same multiplicity-free fusion category if there exists $\gamma^{ab}_c \in \mathbb{k}$ (sometimes called a {\it gauge transformation}) which are 0 whenever $N^{ab}_c=0$ and otherwise must satisfy: 
\[\gamma^{1a}_a=1 \qquad \gamma^{a1}_a = 1 \]
and:
\[ [G^{abc}_d]_{e,f} = \gamma^{ab}_e\,\gamma^{ec}_d\,[F^{abc}_d]_{e,f} (\gamma^{bc}_f)^{-1}(\gamma^{af}_d)^{-1} \]
for all $a,b,c,d,e,f \in L$. 

\paragraph{The data type of a spherical fusion category} Apart from the rank $|L|$ and the Frobenius-Perron eigenvalues of the {\it fusion matrices} $(N_a)_{bc}:= N^{ab}_c$ there are few meaningful {\it gauge invariant} quantities of fusion categories one can extract from the data $\{N^{ab}_c, [F^{abc}_d]_{e,f} \}$. Gauge invariants of fusion categories become more readily available upon the addition of a {\it pivotal} structure, which can be encoded as $t_a \in \mathbb{k}$ satisfying the {\it pivotal equations}

\begin{enumerate}
\item $t_1=1$
\item $t_a = t_{a^*}$
\item $\frac{t_c}{t_at_b}= [F^{abc^*}_1]_{c,a^*}[F^{bc^*a}_1]_{a^*b^*}[F^{c^*ab}_1]_{b^*c} = [F^{abc^*}_1]^{-1}_{c,a^*}[F^{bc^*a}_1]^{-1}_{a^*b^*}[F^{c^*ab}_1]^{-1}_{b^*c}$.
\end{enumerate}
When $t_a \in \{\pm 1\}$ then we say the corresponding fusion category is {\it spherical}.

Together the fusion coefficients, $F$-symbols, and (spherical) pivotal coefficients \[\{N^{ab}_c, [F^{abc}_d]_{e,f}, t_a : a,b,c,d,e,f \in L\}\] define a spherical fusion category. Conversely, every multiplicity-free spherical fusion category is equivalent to a skeletal fusion category described by such data.

\subsection{The graphical calculus in a skeletal spherical category} 

Associated to the data of a skeletal spherical fusion category introduced in Appendix~\ref{sec:fusion} is a coherent diagrammatic calculus on trivalent graphs embedded in the 2-sphere $S^2$. Our goal here is simply to give a minimal overview of this diagrammatic calculus, focusing only on those aspects which are needed to define the Turaev-Viro-Barrett-Westbury invariant of a triangulated 3-manifold. References with further details include~\cite{WangCBMS} or~\cite{CuiWang:multi}.   %The reader familiar with representation theory may like to think of this data as consisting of the structure constants of a ring of irreducible representations under the tensor product and direct sum together with matrix realizations of intertwiners between them. However, neither this point of view nor a background in category theory is a prerequisite for working with skeletal fusion categories, which we now describe in more detail. 

Let $\{N^{ab}_c, [F^{abc}_d]_{e,f}, t_a : a,b,c,d,e,f \in L\}$ be a skeletal spherical fusion category and let $\Gamma$ be a directed trivalent graph in the plane with edges labeled by elements of $L$.  The orientation of an edge labeled by $a \in L$ can be reversed by replacing the label with its {\it dual} $a^* \in L$.\footnote{Some care must be taken with the orientations of edges at critical points of the graph but we will not worry about this here.} Edges labeled by $1 \in L$---which we will indicate by drawing a dashed line and for which the orientation is not important since $1^*=1$---can be erased or inserted at will. 
\[\begin{tikzpicture}[line width=1,baseline=.5cm]]
\draw (0,.5) --(0,0);
\draw[<-](0,.5)--(0,1);
\draw (.25,.5) node[right] {$a$};
\end{tikzpicture}
\quad =\quad
\begin{tikzpicture}[line width=1,baseline=.5cm]]
\draw[->] (0,0) --(0,.5);
\draw(0,.5)--(0,1);
\draw (.25,.5) node[right] {$a^*$};
\end{tikzpicture}
\qquad \qquad
\begin{tikzpicture}[line width=1,baseline=.5cm]]
\draw (0,.5) --(0,0);
\draw[](0,.5)--(0,1);
\draw (.25,.5) node[right] {$1$};
\end{tikzpicture}
\quad =\quad
\begin{tikzpicture}[line width=1,baseline=.5cm]]
\draw[dashed] (0,0) --(0,.5);
\draw[dashed,](0,.5)--(0,1);
%\draw (.25,.5) node[right] {$a^*$};
\end{tikzpicture}
\quad =\quad
\begin{tikzpicture}[line width=1,baseline=.5cm]]
\draw (,.6) node{$\emptyset$};
\end{tikzpicture}
\]
We say that a trivalent vertex with positively oriented edges labeled by $a,b,c \in L$ is {\it admissibly labeled} if $N^{ab}_c \ne 0$. Admissibility of vertices with other edge orientations are defined analogously using duality.
\[
\begin{tikzpicture}[line width=1,baseline=-.5cm, scale=.75]
\draw[->] (0,0) node[above] {$a$} -- (.25,-.25);
\draw (.25,-.25)--(.5,-.5);
\draw (.5,-.5) -- (.5,-1) node[below] {$c$};
\draw[-<] (.5,-.75);
\draw[->] (1,0) node[above] {$b$} -- (.75,-.25);
\draw (.75,-.25)--(.5,-.5);
\end{tikzpicture}
\quad \text{ and } 
\begin{tikzpicture}[line width=1,baseline=.375cm, scale=-.75]
\draw (0,0) node[below] {$b$} -- (.25,-.25);
\draw[<-] (.25,-.25)--(.5,-.5);
\draw (.5,-.5) -- (.5,-1) node[above] {$c$};
\draw[-<] (.5,-.75);
\draw (1,0) node[below] {$\phantom{b}a\phantom{b}$} -- (.75,-.25);
\draw[<-] (.75,-.25)--(.5,-.5);
\end{tikzpicture}
\quad \text{ admissibly labeled if } N^{ab}_c \ne 0
\]
Then $\Gamma$ is admissibly labeled if all of its vertices are admissibly labeled. Given an admissibly labeled pair of neighboring vertices, one can change the local connectivity by an {\it $F$-move}, at the cost of passing to a formal sum of graphs, each weighted by an appropriate $F$-symbol. 
\[\begin{tikzpicture}[line width=1,baseline=-12.5,scale=.75]
\draw[->] (0,0) node[above] {$a$} -- (.25,-.25);
\draw[->] (.25,-.25)--(.75,-.75);
\draw (.75,-.75)--(1,-1);
\draw[->] (1,-1) -- (1,-1.25);
\draw (1,-1.25) -- (1,-1.5) node[below] {$d$};
\draw[->] (1,0) node[above] {$b$} --(.75,-.25);
\draw (.75,-.25)-- (.5,-.5);
\draw[->] (2,0) node[above] {$c$}-- (1.5,-.5);
\draw (1.5,-.5)-- (1,-1);
%\draw[fill = black] (.5, -.5) circle (.0625) node[left] {\scriptsize $\alpha$};
%\draw[fill = black] (1,-1) circle (.0625);
%\draw (1,-1.25)  node[left] {\scriptsize $\beta$};
\draw (.75,-1) node[left] {$e$};
\end{tikzpicture}
=
\sum_{f} [F^{abc}_d]_{e,f}
\begin{tikzpicture}[line width=1,baseline=-12.5,scale=.75]
\draw[->] (0,0) node[above] {$a$} -- (.5,-.5);
\draw (.5,-.5)-- (1,-1); 
\draw[->] (1,-1) -- (1,-1.25);
\draw (1,-1.25) -- (1,-1.5) node[below] {$d$};
\draw[->] (1,0) node[above] {$b$} -- (1.25,-.25);
\draw (1.25,-.25) -- (1.5,-.5);
\draw[->] (2,0) node[above] {$c$}-- (1.75,-.25);
\draw[->] (1.75,-.25) --(1.25,-.75);
\draw (1.25,-.75)--(1,-1);
%\draw[fill = black] (1.5, -.5) circle (.0625) node[right] {\scriptsize $\gamma$};
%\draw[fill = black] (1,-1) circle (.0625);
%\draw (1,-1.25)  node[right] {\scriptsize $\delta$};
\draw (1.25,-1) node[right] {$f$};
\end{tikzpicture}
\qquad \qquad
\begin{tikzpicture}[line width=1,baseline=-12.5,scale=.75]
\draw[->] (0,0) node[above] {$a$} -- (.5,-.5);
\draw (.5,-.5)-- (1,-1); 
\draw[->] (1,-1) -- (1,-1.25);
\draw (1,-1.25) -- (1,-1.5) node[below] {$d$};
\draw[->] (1,0) node[above] {$b$} -- (1.25,-.25);
\draw (1.25,-.25) -- (1.5,-.5);
\draw[->] (2,0) node[above] {$c$}-- (1.75,-.25);
\draw[->] (1.75,-.25) --(1.25,-.75);
\draw (1.25,-.75)--(1,-1);
\draw (1.25,-1) node[right] {$e$};
\end{tikzpicture}
= \sum_{f} [F^{abc}_d]^{-1}_{e,f}
\begin{tikzpicture}[line width=1,baseline=-12.5,scale=.75]
\draw[->] (0,0) node[above] {$a$} -- (.25,-.25);
\draw[->] (.25,-.25)--(.75,-.75);
\draw (.75,-.75)--(1,-1);
\draw[->] (1,-1) -- (1,-1.25);
\draw (1,-1.25) -- (1,-1.5) node[below] {$d$};
\draw[->] (1,0) node[above] {$b$} --(.75,-.25);
\draw (.75,-.25)-- (.5,-.5);
\draw[->] (2,0) node[above] {$c$}-- (1.5,-.5);
\draw (1.5,-.5)-- (1,-1);
\draw (.75,-1) node[left] {$f$};
\end{tikzpicture}\]

Once equipped with a spherical pivotal structure, the value of a closed loop labeled by $a \in L$ becomes an invariant quantity called the {\it quantum dimension} of $a$ and denoted by $d_a$. The quantum dimensions are given in terms of the skeletal data by $d_a = \begin{tikzpicture}[line width=1,baseline=-2.5] \draw (0,0) circle (.25); \draw[-<] (-.25,0) node[left] {$a$}; \end{tikzpicture} = t_a^{-1}[F^{aa^*a}_a]_{11}^{-1}$. 
The category then inherits a {\it global quantum dimension} $D$, defined as:
\[ D = \sum_{a \in L} d_a^2.\] 
The pivotal coefficients provide a way to close up or {\it trace off} diagrams by introducing a factor of $t_a$ and gluing together certain special trivalent vertices called {\it cups} and {\it caps}. Due to the pivotal coefficients satisfying the spherical equations there is no ambiguity and tracing off can be done either to the left or the right, so that diagrams can be interpreted as embedded on the surface of a sphere rather than the plane (hence the name spherical).

Sphericality also endows diagrams with a degree of isotopy invariance. For our purposes it suffices to establish conventions for how to perform the {\it bubble-popping move}.
%and apply the {\it completeness relation}.  
%Since in the remainder of the paper we will be working with {\it unitary} fusion categories, we take the following convention for {\it bubble-popping} move.
%\margincoll{Could add the completeness relation later although strictly speaking it's not needed for the reader to follow the calculation of the $\theta$ and tetrahedral symbols}
% Let's not if it's not necessary - Eric
The spherical structures on Tambara-Yamagami categories that we will study satisfy a condition called \emph{pseudo-unitarity}, which allows us to define the following especially nice rule for bubble-popping (see~\cite{CuiWang:multi}):
\vspace{-.2cm}
\[\begin{tikzpicture}[line width=1,baseline=0, scale=.75]
\draw[->] (0,0) node[left] {$\phantom{b}a\phantom{b}$} -- (.25,-.25);
\draw (.25,-.25)--(.5,-.5);
\draw(.5,-.5) -- (.5,-1) node[right] {$c'$};
\draw[-<] (.5,-.75);
\draw[->] (1,0) node[right] {$b$} -- (.75,-.25);
\draw (.75,-.25)--(.5,-.5);
\begin{scope}[yscale=-1]
\draw (0,0)  -- (.25,-.25);
\draw[<-] (.25,-.25) --(.5,-.5);
\draw (.5,-.5) -- (.5,-1) node[left] {$c$};
\draw[-<] (.5,-.75);
\draw (1,0)  -- (.75,-.25);
\draw[<-] (.75,-.25) --(.5,-.5);
\end{scope}
\end{tikzpicture}
\quad = \quad \delta_{c,c'}\frac{\sqrt{d_ad_b}}{\sqrt{d_c}} \begin{tikzpicture}[line width=1,baseline=0, scale=.5]
\draw (0,-1) node[below] {$c$} -- (0,1) node[above] {$c$};
\draw[-<] (0,0);
\end{tikzpicture}.\]

In summary, the admissibility of labelings of trivalent graphs is dictated by the fusion coefficients, and sphericality gives a well-defined value to closed loops. The various $F$-moves and bubble-popping moves provide local rules that allow us to simplify arbitrary closed trivalent planar graphs until we are left with complex numbers, which are necessarily isotopy invariants of the initial labeled planar graphs.

\subsection{TVBW state sum invariants of 3-manifolds}
Next we explain how the graphical calculus associated to a multiplicity-free skeletal spherical fusion category $\mathcal{C}$ is used to define the Turaev-Viro-Barrett-Westbury state sum invariant of a (closed and oriented) triangulated 3-manifold. Let $\tri$ be a triangulation of a closed 3-manifold $M$, with vertices $V$, edges $E$, faces $F$, and tetrahedra $T$. Recall from Section~\ref{ss:top} that we fix an arbitrary orientation of the edges of $\tri$. A {\em coloring} $\theta: E \to L$ of $\tri$ is an assignment of an element of the label set $L$ to each edge of the triangulation.

We now define the weights involved in the definition of the Turaev-Viro-Barrett-Westbury state sum invariant for a given coloring $\theta$.  Let $\theta_e:=\theta(e) \in L$ denote the label of an edge $e \in E$.  
\begin{enumerate}
\item To each vertex $v \in V$ we shall assign the global quantum dimension $D$.
\item To each edge $e \in E$ we assign the quantum dimension $d_{\theta_e}$.
\item To each face $f \in F$ with edges $e_1,e_2,e_3$ we assign the factor $\sqrt{d_{\theta_{e_1}}d_{\theta_{e_2}}d_{\theta_{e_3}}}$.
\item To each tetrahedron $t \in T$ with edges $e_1,e_2, \ldots e_6$ and orientation $O(t) = \pm$, we assign a {\it tetrahedral weight} $|t|^{O(t)}_{\theta}$, whose definition follows.
\end{enumerate}

Let $(\Delta,+)$ and $(\Delta,-)$ be model oriented tetrahedra with edges $a,b,c,d,e,f$, equipped with model edge orientations as follows:
\vspace{-.2cm}
\[
\begin{tikzpicture}[scale=.45,baseline=20,decoration={
    markings,
    mark=at position 0.5 with {\arrow{>}}}]
%\draw (3,5) node[above] {3};
%\draw (0,0) node[left] {1};
%\draw (6,0) node[right] {0};
\draw (3,-3) node{$(\Delta,+)$};
\draw[thick,postaction={decorate}] (0,0)--(4,-1.5) node[midway,below] {$b$};
\draw[thick,postaction={decorate}] (0,0)--(3,5) node[midway,left] {$d$};
\draw[thick,postaction={decorate}] (3,5)--(4,-1.5) node[midway,right] {$f$};
\draw[thick,postaction={decorate}] (6,0)--(4,-1.5) node[midway,below] {$a$};
\draw[thick,postaction={decorate}] (6,0)--(3,5) node[midway,right] {$e$};
\draw[dashed,thick,postaction={decorate}] (0,0)--(6,0) node[midway,above] {$c$};
\begin{scope}[xshift=11cm]
\draw (3,-3) node{$(\Delta,-)$};
\draw[thick,postaction={decorate}] (0,0)--(4,-1.5) node[midway,below] {$b$};
\draw[thick,postaction={decorate}] (0,0)--(3,5) node[midway,left] {$d$};
\draw[thick,postaction={decorate}] (3,5)--(4,-1.5) node[midway,right] {$f$};
\draw[thick,postaction={decorate}] (6,0)--(4,-1.5) node[midway,below] {$a$};
\draw[thick,postaction={decorate}] (6,0)--(3,5) node[midway,right] {$e$};
\draw[dashed,thick,postaction={decorate}] (0,0)--(6,0) node[midway,above] {$c$};
\end{scope}
\end{tikzpicture}
\]
(As usual, the edge orientations in these models are arbitrary, and have nothing to do with the tetrahedron's orientation $\pm$.)  If $\theta$ is a labeling of the edges of $\Delta$ by $\theta_a,\theta_b,\theta_c,\theta_d,\theta_e,\theta_f$, then we define the \emph{model tetrahedral weights}, denoted equivalently either by $|\Delta|_\theta^\pm$ or by $\sixj{\theta_{a}}{\theta_{b}}{\theta_{c}}{\theta_{d}}{\theta_{e}}{\theta_{f}}^\pm$, via the graphical calculus:
\[\begin{aligned}
|\Delta|^{+}_{\theta} = 
\sixj{\theta_{a}}{\theta_{b}}{\theta_{c}}{\theta_{d}}{\theta_{e}}{\theta_{f}}^+ &:=
\begin{tikzpicture}[scale=.4,baseline=-12]
\tetsymbleft{a}{b}{c}{d}{e}{f}
\end{tikzpicture} = [F^{abd}_e]_{c,f}\sqrt{d_ad_bd_dd_e} \\
|\Delta|^{-}_{\theta} = 
\sixj{\theta_{a}}{\theta_{b}}{\theta_{c}}{\theta_{d}}{\theta_{e}}{\theta_{f}}^- &:=
\begin{tikzpicture}[scale=.4,baseline=-12]
\tetsymbright{a}{b}{c}{d}{e}{f}
\end{tikzpicture} = [F^{abd}_e]^{-1}_{f,c}\sqrt{d_ad_bd_dd_e}
\end{aligned}\]
We call a coloring of the edges of these model tetrahedra \emph{admissible} if the induced labeled trivalent graphs seen here are admissibly labeled.

For each $t \in T$ with orientation $O(t)$, we fix an orientation preserving isomorphism $I_t: t \to (\Delta,O(t))$.  If $\theta$ is our given coloring of $\tri$, then we use the coloring of the edges of $t$ to induce a coloring of $\Delta$; however, as the edge orientations of $t$ and $\Delta$ were chosen arbitrarily, there is no promise that $I$ matches the edge orientations, and any time the orientation of an edge $e$ of $t$ does not match the orientation of the edge $I_t(e)$ of $\Delta$, we must color $I_t(e)$ by $\theta_e^*$ instead of $\theta_e$.  In other words, we define the push-forward coloring to $(\Delta,O(t))$ along $I_t$ by the formula:
\[(I_t \theta)(e) :=
\begin{cases}
\theta_{I^{-1}(e)} & \text{ if orientations of $e$ and $I^{-1}(e)$ match}\\
\theta_{I^{-1}(e)}^* & \text{ else}.
\end{cases}
\]
Finally, define the tetrahedral weight:
\[ |t|_\theta^{O(t)} := |\Delta|_{I_*\theta}^{O(t)}.\]

We say a coloring $\theta: E \to L$ of $\tri$ is {\em admissible} if for every oriented tetrahedron $t \in T$, the coloring $I_t(\theta)$ of the model tetrahedron is admissible.  (The tetrahedral weight of an inadmissible coloring is always 0.)  Denote the set of admissible colorings of $\tri$ by $\adm_L(\tri)$. Then the {\it state sum} of $\M$ with triangulation $\tri$ is given by:
\[|M|_{\mathcal{C},\tri, \{I_t\}_{t \in T}} = 
  \displaystyle\sum_{\theta \in \adm_L(\tri)} 
    \frac{ 
      \displaystyle\prod_{t \in T} 
      %\sixj{\theta_{e_1}}{\theta_{e_2}}{\theta_{e_3}}{\theta_{e_4}}{\theta_{e_5}}{\theta_{e_6}} 
      |t|_\theta^{O(t)}
      \prod_{e \in E} d_{\theta{e}} 
      }
      {
      \displaystyle\prod_{f \in F} \sqrt{d_{\theta{e_1}}d_{\theta{e_2}}d_{\theta{e_3}}} \prod_{v \in V}D.
      }
\]
Remarkably, the structures of a spherical fusion category guarantee that $|M|_{\mathcal{C},\tri, \{I_t\}_{t \in T}}$ does not depend on the choices of isomorphisms $I_t$ for the tetrahedra $t \in T$, and so we may suppress it from our notation (although going forward, we will include a choice of model isomorphisms $\{I_t\}_{t \in T}$ in the data structure of our triangulations, as these are necessary to compute the invariant).  Even more is true.

\begin{theorem}[Topological invariance and multiplicativity of state sum~\cite{BarrettWestbury, Turaev:book, TuraevVirelizier:approaches}]
\label{thm:toppropty}
%Let $\ty(A,\chi,\tau)$ be a Tambara-Yamagami category. 
Let $\mathcal{C}$ be a spherical fusion category with global quantum dimension $D = \sum_{a \in L} d_a^2$. The state sum defined from $\mathcal{C}$ is a {\em topological invariant}, \ie, for any closed 3-manifold $\M$ and $\tri_1, \tri_2$ generalized triangulations of $\M$, we have:
%\[
% \ty(A,\chi,\tau)(\tri_1) = \ty(A,\chi,\tau)(\tri_2),
%\] 
\[
  |\M|_{\mathcal{C},\tri_1} = |\M|_{\mathcal{C},\tri_2}
\]
and we write $|\M|_{\mathcal{C}}$. Additionally, if $\M = \M_1 \# \M_2$, then:
\[
  |\M|_{\mathcal{C}} = D|\M_1|_{\mathcal{C}}|\M_2|_{\mathcal{C}}
\]
%\marginclem{is there a normalization here?}
\end{theorem}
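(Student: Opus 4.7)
Both statements are classical consequences of the Barrett-Westbury axioms and detailed proofs can be found in \cite{BarrettWestbury, Turaev:book, TuraevVirelizier:approaches}. I would prove topological invariance first and then multiplicativity.

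For topological invariance, my plan is to invoke Pachner's theorem: any two generalized triangulations of the same closed PL $3$-manifold are related by a finite sequence of local bistellar flips, namely the $2$-$3$ move (replacing two tetrahedra sharing a common face by three tetrahedra sharing a common interior edge) and the $1$-$4$ move (subdividing one tetrahedron by a new interior vertex), together with their inverses. It therefore suffices to verify that $|\M|_{\mathcal{C},\tri}$ is preserved by each local move. The $2$-$3$ move identity, after cancelling the shared per-simplex weights and summing over the label of the newly introduced interior edge weighted by $d_{\theta_e}$, reduces directly to the pentagon equation satisfied by the $F$-symbols $[F^{abc}_d]_{e,f}$---which holds by the definition of a skeletal spherical fusion category. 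The $1$-$4$ move is the subtler identity: the new central vertex contributes a factor of $1/D$, and one must show that this is cancelled by the joint contribution of the four new internal edges, six new faces, and three new tetrahedra. This cancellation follows by iteratively applying the pentagon equation to collapse the subdivided configuration, then closing off the resulting bubbles via the pseudo-unitary bubble-popping rule together with the closure identity $\sum_{a \in L} d_a^2 = D$.

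For multiplicativity, my plan is to exhibit an explicit triangulation $\tri$ of $\M_1 \# \M_2$ built from triangulations $\tri_1, \tri_2$ of $\M_1, \M_2$ by removing the interior of an abstract tetrahedron $t_i \in \tri_i$ from each and identifying the two resulting boundary $2$-spheres $\partial t_1 \cong \partial t_2$ via an orientation-reversing simplicial homeomorphism. Direct simplex counts give that $\tri$ has $v_1 + v_2 - 4$ vertices, $e_1 + e_2 - 6$ edges, $f_1 + f_2 - 4$ faces, and $(\#_1 - 1) + (\#_2 - 1)$ tetrahedra. Any coloring $\theta$ of $\tri$ restricts to a common boundary coloring $\phi$ of the identified $S^2$ and may be completed on each side independently; writing $|\M|_{\mathcal{C},\tri}$ as a double sum over $\phi$ and its compatible fillings, and applying the local $1$-$4$-move identity established above to each $t_i$ to ``fill the hole back in'', the sum refactors as $|\M_1|_{\mathcal{C}} \cdot |\M_2|_{\mathcal{C}}$ times an $S^2$-gluing normalization that evaluates to a single factor of $D$ using $\sum_a d_a^2 = D$ and the remaining per-simplex weight accounting on the glued sphere.

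The main obstacle will be the $1$-$4$ move invariance, since it is where the $F$-symbol pentagon equation, the pseudo-unitary pivotal structure, and the vertex normalization $1/D$ must interact simultaneously; once this is in place, the $2$-$3$ move is essentially just the pentagon equation, and the connected sum reduces to a controlled bookkeeping exercise built on top of it.
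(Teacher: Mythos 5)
The paper does not actually prove Theorem~\ref{thm:toppropty}: it is imported from the literature with citations to Barrett--Westbury and Turaev--Virelizier, and the surrounding text only uses it as a black box (multiplicativity justifies step~\refitm{product} of the algorithm after the Jaco--Rubinstein crushing). Your proposal is a correct reconstruction of the standard argument those references give, so there is no conflict---but two points deserve care before you could call it a proof rather than a plan. First, Pachner's theorem in its classical form applies to genuine simplicial (combinatorial) triangulations; the theorem here is stated for \emph{generalized} triangulations, where faces of a single tetrahedron may be self-identified. The move calculus for singular triangulations is the Matveev--Piergallini/Amendola result (2--3 moves connect any two one-vertex triangulations with at least two tetrahedra, supplemented by the 0--2 ``lune'' move or the 1--4 move to change the vertex count), and the references handle this by passing to special spines or to Turaev--Virelizier's chain-mail/skeleton formalism; simply invoking Pachner is not quite enough. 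Second, your connected-sum argument implicitly uses that the state sum of a triangulated ball with fixed boundary coloring depends only on the boundary coloring (the ``fill the hole back in'' step); this is exactly the TQFT gluing property, and the cleanest route is the identity $|\M_1 \# \M_2|_{\mathcal C}\cdot|S^3|_{\mathcal C} = |\M_1|_{\mathcal C}\,|\M_2|_{\mathcal C}$ combined with $|S^3|_{\mathcal C}=D^{-1}$, which packages your bookkeeping into one computation. With those caveats your outline matches the cited proofs; since the paper treats this as known, nothing in your sketch is needed for the paper's results to stand.
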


\subsection{Tambara-Yamagami category and its state sum invariant} 
A {\em Tambara-Yamagami category}~\cite{TambaraY98} over $\C$, denoted by $\ty(A,\chi,\nu)$, is determined by the data of a finite abelian group $A$ with $|A|$ elements, a non-degenerate symmetric bicharacter $\chi: A \times A \to U(1)$ (\ie, a non-degenerate symmetric bilinear pairing into the group $(U(1),\times)$), and a choice of square root of $|A|^{-1}$ denoted by $\nu \in \{\pm |A|^{-1/2}\}$. Skeletal data for these categories in the language of the previous section are as follows.

The label set of $\ty(A,\chi,\nu)$ is given by $L = A \sqcup \{m\}$, \ie, the elements of the group $A$ together with an extra element we call $m$. The duality involution on $L$ sends $a \mapsto -a$ and $m \mapsto m$. The fusion coefficients involving $a,b,c \in A$ are induced by the group addition of $A$:
\[
N_c^{ab} = 
\begin{cases}
1 & \text{ if } c = a+b \\
0 & \text{ else.}
\end{cases}
\]
Note in particular that $N^{ab}_c=N^{ba}_c$, $N^{1a}_a=1$, and $N^{aa^{-1}}_1=1$. The full set of nonzero fusion coefficients is given by $N^{ab}_{a+b} =1, N^{am}_m=1, N^{ma}_m =1,N^{mm}_a = 1$ for $a,b \in A$ corresponding to the following complete set of admissibly labeled trivalent vertices (up to inverting edge orientations while simultaneously replacing the label with its dual):
\[
\begin{tikzpicture}[line width=1,baseline=-.5cm, scale=.75]
\draw[->] (0,0) node[above] {$a$} -- (.25,-.25);
\draw (.25,-.25)--(.5,-.5);
\draw (.5,-.5) -- (.5,-1) node[below] {$a+b$};
\draw[-<] (.5,-.75);
\draw[->] (1,0) node[above] {$b$} -- (.75,-.25);
\draw (.75,-.25)--(.5,-.5);
\end{tikzpicture} \qquad \begin{tikzpicture}[line width=1,baseline=-.5cm, scale=.75]
\draw[->] (0,0) node[above] {$a$} -- (.25,-.25);
\draw (.25,-.25)--(.5,-.5);
\draw (.5,-.5) -- (.5,-1) node[below] {$m$};
\draw[-<] (.5,-.75);
\draw[->] (1,0) node[above] {$m$} -- (.75,-.25);
\draw (.75,-.25)--(.5,-.5);
\end{tikzpicture} \qquad \begin{tikzpicture}[line width=1,baseline=-.5cm, scale=.75]
\draw[->] (0,0) node[above] {$m$} -- (.25,-.25);
\draw (.25,-.25)--(.5,-.5);
\draw (.5,-.5) -- (.5,-1) node[below] {$b$};
\draw[-<] (.5,-.75);
\draw[->] (1,0) node[above] {$m$} -- (.75,-.25);
\draw (.75,-.25)--(.5,-.5);
\end{tikzpicture}
\qquad \begin{tikzpicture}[line width=1,baseline=-.5cm, scale=.75]
\draw[->] (0,0) node[above] {$m$} -- (.25,-.25);
\draw (.25,-.25)--(.5,-.5);
\draw (.5,-.5) -- (.5,-1) node[below] {$m$};
\draw[-<] (.5,-.75);
\draw[->] (1,0) node[above] {$a$} -- (.75,-.25);
\draw (.75,-.25)--(.5,-.5);
\end{tikzpicture}
\]
The $F$-symbols are: 

\[
\begin{aligned}
  {[F^{abc}_{a+b+c}]}_{a+b,b+c} = 1 \quad &\quad  [F^{amm}_b]_{m,c} = \delta_{b,a+c} \\ 
  [F^{abm}_{m}]_{a+b,m} = 1  \quad & \quad [F^{mmc}_{a}]_{b,m} = \delta_{a,b+c}\\
  [F^{mbc}_{m}]_{m,b+c} = 1 \quad & \quad [F^{mbm}_a]_{m,m} = \chi(a,b)\\  
  [F^{amc}_m]_{m,m}=\chi(a,c) \quad & \quad [F^{mmm}_m]_{a,b} = \nu\overline{\chi(a,b)}\\
\end{aligned}
\]
for $\delta_{a,b} = 1$ if $a=b$, and $0$ otherwise. Following~\cite{TuraevVainerman} we work with the pivotal coefficients
\[\begin{aligned} t_a &=1 \text{ for all $a \in A$, and } \\ t_m &= \nu/|\nu| = \textrm{sign}(\nu). &\end{aligned}\]
This is the unique choice of pivotal structure that results in positive quantum dimensions of all the labels, and thus allows us to employ the bubble-popping conventions described above.  With this pivotal structure, the quantum dimensions are:
\[\begin{aligned}
d_a &=1 \text{ for all $a \in A$, and} \\
d_m &= +|A|^{1/2}.
\end{aligned}\]

It is not in general true that the tetrahedron weights of a spherical fusion category must admit full tetrahedral symmetry.  However, it is close to being true in the case of Tambara-Yamagami categories, as we only have four types of admissible tetrahedron weights for each orientation %(it is a straightforward calculation to check all others are equal to one of these):
(one can check that all others are equal to one of these):
\[
\begin{array}{cccc}
\text{{\bf $m$-empty}} &&& \text{{\bf $m$-triangle}}\\
\sixj{a}{b}{a+b}{b-c}{a+c}{c}^+ = 1 &&& \sixj{m}{m}{a}{a}{a+b}{m}^+ = d_m\\
\\
\sixj{a}{b}{a+b}{c-b}{a+c}{c}^- = 1 &&& \sixj{a}{b}{a+b}{m}{m}{m}^- = d_m \\
\\
\text{{\bf flat $m$-quad}} &&& \text{{\bf crooked $m$-quad}}\\
\sixj{a}{m}{m}{b}{m}{m}^+ = d_m \chi(a,b)  &&& \sixj{m}{m}{a}{m}{m}{b}^+ = \nu d_m^2 \overline{\chi(a,b)} \\
\\
\sixj{a}{m}{m}{b}{m}{m}^- = d_m \overline{\chi(a,b)}&&& \sixj{m}{m}{a}{m}{m}{b}^- = \nu^{-1} d_m^2\chi(a,b),\\
\end{array}
\]
where $a,b,c \in A$.  

\section{On \#P-hardness of Tambara-Yamagami invariants}
\label{sec:hard}
%!TEX root = main.tex
In this section we prove:
\begin{theorem}
	If $G = \Ztwo$, $\chi(a,b) = \exp(\pi i ab])$ and $\nu = -1/\sqrt{2}$, then the problem of (exactly) computing $|\M|_{\ty(G,\chi,\nu)}$ is $\#\mathsf{P}$-hard.
\label{th:hard}
\end{theorem}
This is essentially the content of~\cite[Cor.~3]{BurtonMS18}, and our proof is essentially the same, except we fill a minor gap.  The proof is a simple application of a prior hardness result of Kirby and Melvin~\cite{KirbyM04}, but requires us to briefly introduce an alternative framework for defining quantum invariants based on surgery presentations of 3-manifolds rather than triangulations.  The gap in~\cite{BurtonMS18} we fill is precisely related to the distinction between encoding via surgery diagrams and encoding via triangulations.

\paragraph{Reshetikhin-Turaev surgery invariants of 3-manifolds}
While the usual way of encoding combinatorial descriptions of 3-manifolds is via triangulations, an alternative is to use surgery diagrams.  Recall that a \emph{surgery diagram} is a diagram of a framed link $L$ embedded inside $S^3$; the framing of the link specifies a Dehn surgery that can be performed on $S^3$ to yield a closed 3-manifold.  The Lickorish-Wallace theorem says that every oriented 3-manifold can be realized by some surgery diagram, and the Kirby calculus is a well-known combinatorial method of understanding when two such surgery diagrams represent homeomorphic 3-manifolds via local moves; see~\cite{Rolfsen} for a textbook introduction to these ideas.  Diagrams of framed links in $S^3$ can be described combinatorially by planar drawings with framing coefficients, and so we might treat surgery diagrams as a finite data type representing 3-manifolds.

Analogous to how TVBW state-sum invariants $|\M|_\calC$ of 3-manifolds are defined by building a tensor network using the data of a spherical fusion category $\calC$ and a triangulation of $M$, one may likewise define quantum invariants $\tau_\calB(\M)$ of 3-manifolds by building a tensor network using the data of a \emph{modular fusion category} $\calB$ and a surgery diagram of $M$.  This is called the Reshetikhin-Turaev construction~\cite{ReshetikhinTuraev:invariants,Turaev:modular}.

For our purposes, we do not need to understand the definition of modular fusion category, and so we refer the reader to any of the standard references such as~\cite{Turaev:book} or~\cite{BakalovKirillov:book} for details.  We simply need two facts:
\begin{enumerate}
\item If $\calB$ is a modular fusion category, then it is also a spherical fusion category in a canonical way.  Thus, modular fusion categories may be used for \emph{both} the Reshetikhin-Turaev construction and the Turaev-Viro-Barrett-Westbury construction of quantum invariants.
\item If $\calB$ is a modular fusion category, then the TVBW state-sum invariant and Reshetikhin-Turaev surgery invariant based on $\calB$ are related as follows:
\[ |\M|_\calB = \tau_{\calB}(M)\overline{\tau_{\calB}(M)},\]
where as usual $\overline{\tau_{\calB}(M)}$ is the complex conjugate of $\tau_{\calB}(M)$.
\end{enumerate}
The first fact follows immediately from the definition of modular fusion category, and the second fact is proved by various authors (at various levels of generality), including~\cite{TuraevVirelizier:approaches,Balsam,TuraevVirelizier:homotopyapproaches}.

\paragraph{Proof of Theorem \ref{th:hard}}
In~\cite{KirbyM04}, Kirby and Melvin work with a certain modular fusion category $\calB$ such that computing the Reshetikhin-Turaev invariant of a 3-manifold $M$---\emph{when encoded via a surgery diagram $L$}---is \#P-hard.  More precisely, given a Boolean cubic function $f$ in $n$ variables of the form
\[ f(x_1,\dots,x_n) = \sum_{i=1}^n c_i x_i + \sum_{1\le i < j \le n} c_{ij} x_ix_j + \sum_{1 \le i <j < k \le n}^n c_{ijk} x_ix_jx_k\]
where $c_i,c_{ij},c_{ijk} \in \Ztwo$, they build a surgery diagram $L_f$ with the property that
\[ \tau_\calB(M_f) = 2\times \#\{(x_1,\dots,x_n) \in (\Ztwo)^n \mid f(x_1,\dots,x_n) = 0 \mod 2\} - 2^n\]
where $M_f$ is the 3-manifold encoded by $L_f$.  They then show that counting solutions to $f(x_1,\dots,x_n) = 0 \mod 2$ is \#P-hard (via non-parsimonious reduction), and thereby conclude the same for $ \tau_\calB(M_f)$, at least when $M_f$ is presented by a surgery diagram.

The surgery diagram $L_f$ constructed by Kirby and Melvin is quite simple: start with an unlink with $n$ components labeled by the variables $x_1,\dots,x_n$.  For each $i$ with $c_i=1$, tie a trefoil knot in the corresponding component; for each $i,j$ with $c_{ij}=1$, arrange so that the corresponding components are linked like a Whitehead link; for each $i,j,k$ with $c_{ijk}=1$, link the components like the Borromean rings.  Now take the 0-framing to get $L_f$.

The proof of~\cite[Cor.~3]{BurtonMS18} asserts that Kirby and Melvin furthermore provide a \emph{triangulation} of the manifold $M_f$ encoded by the 0-framed surgery $L_f$, but this requires further arguments, and so we address this here.  Fortunately, it is folklore, using standard methods, that, given any link diagram $L$, the 0-framed surgery $M$ of $L$ can be triangulated in time polynomial in the crossing number of $L$. We sketch the basic idea here. 

Begin by triangulating the complement $S^3 \setminus N(L)^\circ$, where $N(L)^\circ$ is an open regular neighborhood of $L$; see~\cite{HassLagariasPippenger} or~\cite[Rmk.~6.2]{MariaP19} for some standard methods of doing so.  Critically, since we specifically wish to build the \emph{0-framed} surgery manifold, homological calculations enable us to efficiently draw the surgery slopes on each of the torus boundary components of our triangulation using an amount of space that is polynomial in the crossing number of $L$.  Thus, we may refine our triangulation in a neighborhood of each boundary component so that each of the boundary slopes is identified as a curve in the 1-skeleton.  Now we may efficiently perform an appropriate triangulated Dehn filling of each torus using our favorite method.  {\em Crushing} and {\em layered solid tori} are two popular and efficient techniques, but completely elementary ways are also efficient.  ({\em E.g.}, we may glue a triangulated disk to each slope, where the triangulation is simply the cone of the slope's triangulation; we may then fill in the resulting 2-sphere boundaries with triangulated 3-balls by taking cones.)

Now it is an exercise in definition chasing---in particular, falling back to Kirby and Melvin's earlier paper~\cite{KirbyMelvin:RT} and comparing to~\cite{TambaraY98}---to see that their category of interest is $\calB \simeq \ty(A,\chi,\nu)$, where
$A = \Ztwo$, $\chi(a,b) = \exp(\pi i ab)$ and $\nu = -1/\sqrt{2}$ as in the statement of the theorem we are currently proving.  Thus, we have that the Tambara-Yamagami state sum invariant of $M_f$ using $\ty(A,\chi,\nu)$ satisfies
\[
\begin{aligned}
|M_f|_{\ty(A,\chi,\nu)} &= \tau_{\ty(A,\chi,\nu)}(M_f)\overline{\tau_{\ty(A,\chi,\nu)}(M_f)} =\tau_{\ty(A,\chi,\nu)}(M_f)^2\\
& = \left(2\times \#\{(x_1,\dots,x_n) \in \Ztwo^n \mid f(x_1,\dots,x_n) = 0 \mod 2\} - 2^n\right)^2.
\end{aligned}\]
By arguing along the same lines of~\cite{BurtonMS18}, we may conclude that the problem of computing $|M_f|_{\ty(A,\chi,\nu)}$ from a \emph{triangulation} of $M_f$ is \#P-hard.
\qed

\end{document}